\def\theequation{\thesection.\arabic{equation}}
\newcommand{\eqnum}{\refstepcounter{equation}\textup{\tagform@{\theequation}}}
\newcounter{copy}
\renewcommand{\thecopy}{\ifnum0=\c@section\arabic{copy}\else\thesection.\arabic{copy}'\fi}
\theoremstyle{definition}
\newtheorem{defn}[equation]{Definition}
\newtheorem{para}[equation]{}
\theoremstyle{plain}
\newtheorem{thm}[equation]{Theorem}
\newtheorem{lem}[equation]{Lemma}
\theoremstyle{remark}
\newtheorem{rmk}[equation]{Remark}
\crefname{defn}{Definition}{Definitions}
\crefname{notn}{Notation}{Notations}
\crefname{assmp}{Assumption}{Assumptions}
\crefname{thm}{Theorem}{Theorems}
\crefname{prp}{Proposition}{Propositions}
\crefname{lem}{Lemma}{Lemmas}
\crefname{cor}{Corollary}{Corollaries}
\crefname{conj}{Conjecture}{Conjectures}
\crefname{rmk}{Remark}{Remarks}
\crefname{exmp}{Example}{Examples}
\crefname{section}{Section}{Sections}
\crefname{subsection}{Subsection}{Subsections}
\crefname{para}{}{}
\crefname{appendix}{Appendix}{Appendices}
\crefname{subappendix}{Appendix}{Appendices}
\crefname{table}{Table}{Tables}
\newcommand{\bB}{\mathbb{B}}
\newcommand{\bC}{\mathbb{C}}
\newcommand{\bK}{\mathbb{K}}
\newcommand{\bM}{\mathbb{M}}
\newcommand{\bR}{\mathbb{R}}
\newcommand{\bT}{\mathbb{T}}
\newcommand{\bX}{\mathbb{X}}
\newcommand{\bZ}{\mathbb{Z}}
\newcommand{\cA}{\mathcal{A}}
\newcommand{\cB}{\mathcal{B}}
\newcommand{\cT}{\mathcal{T}}
\newcommand{\ba}{\mathbf{a}}
\newcommand{\bb}{\mathbf{b}}
\newcommand{\bc}{\mathbf{c}}
\newcommand{\bk}{\mathbf{k}}
\newcommand{\br}{\mathbf{r}}
\newcommand{\bv}{\mathbf{v}}
\newcommand{\bw}{\mathbf{w}}
\newcommand{\bx}{\mathbf{x}}
\newcommand{\by}{\mathbf{y}}
\newcommand{\bz}{\mathbf{z}}
\newcommand{\sD}{\mathscr{D}}
\newcommand{\sH}{\mathscr{H}}
\newcommand{\pt}{\mathrm{pt}}
\DeclareMathOperator{\K}{\mathrm{K}}
\DeclareMathOperator{\KR}{\mathrm{KR}}
\DeclareMathOperator{\Ad}{\mathrm{Ad}}
\newcommand{\tr}{\mathrm{tr}}
\author{Yosuke Kubota}
\address{Department of Mathematical Sciences, Shinshu University\\ 3-1-1 Asahi, Matsumoto, Nagano, 390-8621, Japan\\ and \\ RIKEN iTHEMS \\ 2-1 Hirosawa, Wako, Saitama, 351-0198, Japan}
\email{ykubota@shinshu-u.ac.jp}
\title[The bulk--dislocation correspondence]{The bulk--dislocation correspondence for weak topological insulators on screw--dislocated lattices}
\date{\today}
\begin{document}
\maketitle
\begin{abstract}
A weak topological insulator in dimension $3$ is known to have a topologically protected gapless mode along the screw dislocation. In this paper we formulate and prove this fact with the language of C*-algebra K-theory. The proof is based on the coarse index theory of the helical surface. 
\end{abstract}

\tableofcontents

\section{Introduction}\label{section:1}
The bulk-boundary correspondence is one of the fundamental issues in the theory of topological insulators. 
This principle states that a topological nature of a quantum system at the bulk ensures the existence of topologically protected boundary states of the same system to which a boundary is inserted. 
There are several theoretical frameworks demonstrating the bulk-boundary correspondence physically or mathematically. 
This paper deals with the one based on functional analysis and operator algebra, in which the bulk-boundary correspondence is understood as the boundary map of  C*-algebra K-theory.
For the researches in this direction, we refer the readers to e.g.~\cites{bellissardNoncommutativeGeometryQuantum1994,kellendonkEdgeCurrentChannels2002,prodanBulkBoundaryInvariants2016,bourneKtheoreticBulkedgeCorrespondence2017,kubotaControlledTopologicalPhases2017}.

The aim of this paper is to prove the \emph{bulk-dislocation correspondence} in this functional analytic setup. 
In short, this principle states that the topology of a $3$-dimensional Hamiltonian ensures the existence of topologically protected states localized at a screw dislocation. More specifically, a weak topological insulator in the $xy$-direction has a localized state along a screw dislocation at the $z$-axis. 
This phenomenon is already discovered in the literature of physics such as Ran--Zhang--Vishwanath~\cite{ranOnedimensionalTopologicallyProtected2009}, Teo--Kane~\cite{teoTopologicalDefectsGapless2010} and Imura--Takane--Tanaka~\cite{imuraWeakTopologicalInsulator2011}. There is also a mathematical approach to this problem by Hannabuss--Mathai--Thiang \cite{hannabussTdualitySimplifiesBulkboundary2016}*{Section 6}, in which the K-theory of the C*-algebra of Heisenberg groups are studied through the noncommutative T-duality. This approach shares a mathematical basis with ours but we deal with a different C*-algebra.

In a functional analytic formulation we are working in, a quantum system is characterized by its Hamiltonian operator. 
Under the 1-particle and tight-binding approximations, it is a bounded operator acting on the Hilbert space of lattices $\ell^2(\bZ^d; \bC^N)$, where $d$ is the space dimension and $N$ is the internal degree of freedom. 
We assume that $H$ is insulated, i.e., has a spectral gap at the Fermi energy (which is often assumed to be $0$). 
If we additionally assume that $H$ is of short-range and translation invariant, it is a self-adjoint invertible element in the C*-algebra $C^*_r(\bZ^d) \otimes \bM_N$, where $C^*_r(\bZ^d)$ is the group C*-algebra of the abelian group $\bZ^d$.
Such an operator is topologically classified by the $\K_0$-group of $C^*_r(\bZ^d)$. 
The corresponding boundary Hamiltonian lies in the C*-algebra $\cT \otimes C^*_r(\bZ^{d-1}) $, where $\cT$ denotes the Toeplitz algebra. 
The bulk-boundary correspondence is understood as the boundary map of the following Toeplitz exact sequence
\[ 0 \to \bK \otimes C^*_r(\bZ^{d-1}) \to \cT \otimes C^*_r(\bZ^{d-1}) \to C^*_r(\bZ^d) \to 0. \]
Indeed, the image of $[H] \in \K_0(C^*_r(\bZ^d))$ 
describes topologically protected boundary states of the corresponding boundary Hamiltonian.

Our formulation of the bulk-dislocation correspondence is given in a similar fashion. We define a C*-algebra exact sequence of the form
\[0 \to \bK \otimes C^*_r(\bZ)   \to \cA_{\bz} \to C^*_r(\bZ^3) \to 0, \]
where $\cA_{\bz}$ is the C*-algebra of observables on the dislocated lattice (a precise definition is given in \eqref{eq:ext}). Then the boundary map of C*-algebra K-theory sends a gapped phase without dislocation $[H] \in \K_0(C^*_r(\bZ^3))$ to the $\K_{-1}$-element describing the topologically protected dislocation state.
The main theorem of this paper, \cref{thm:main}, determines this homomorphism. Indeed, it turns out to send the $2$-dimensional Bott generator $\beta_{xy}$ in the $xy$-direction to the generator of $\K_1(\bK \otimes C^*_r(\bZ) )$ and other Bott generators to zero. 

The proof of our main theorem is based on the coarse geometry of the helical surface. 
As is discussed in \cref{section:discuss}, the bulk--dislocation correspondence is thought of as a bulk-boundary correspondence for Hamiltonians on the helical surface. 
This identification is given in a `large-scale' way, which matches with the philosophy of coarse geometry. 
In particular, the key technical lemmas for the proof, \cref{lem:sigma}, relies on a relatively new technique to lift a finite propagation operator on the covering space, which is originally given in \cite{gongGeometrizationStrongNovikov2008}. 
A remarkable point is that the proof is parallel to the author's study on the codimension 2 transfer map in higher index theory \cites{kubotaGromovLawsonCodimensionObstruction2020,kubotaCodimensionTransferHigher2021}, which is motivated from differential topology of manifolds.
Indeed, the simplest case of this codimension 2 transfer map, a map from the C*-algebra K-theory of the group C*-algebra of $\bZ^2$ to that of the trivial group, is the same thing as the proof of the bulk-dislocation correspondence given in this paper. 
A detail is discussed in \cref{rmk:codim2}.

This paper is organized as following. In \cref{section:formulation}, we give an operator-algebraic setup of the bulk-dislocation correspondence and state the main theorem. In \cref{section:proof}, we give a heuristic discussion introducing the strategy of the proof, summarize the foundation of coarse index theory, and give a proof of the main theorem. In \cref{section:remarks}, we list remarks on our main theorem, including several generalizations. 
In \cref{section:appendix}, we give a proof of the key lemma of this paper, originally given in \cite{kubotaCodimensionTransferHigher2021}, specified to our setting.  

\subsection*{Acknowledgement}
This work was supported by RIKEN iTHEMS and JSPS KAKENHI Grant Numbers 19K14544, JPMJCR19T2, 17H06461.

\section{A mathematical formulation}\label{section:formulation}
We start with formulating the problem with the language of functional analysis and C*-algebra. 
We define the C*-algebra of quantum mechanical observables on the lattice possessing the screw dislocation along the Burgers vector $\bz:=(0,0,1) \in \bZ^3$. 
The general case, i.e., the case that the Burgers vector is an arbitrary lattice vector, is discussed later in \cref{rmk:Burgers}.

Let $X:=\bR^2$, let $X_0:=X \setminus \{ (0,0)\}$ and let $\widetilde{X}_0$ be the universal covering of $X_0$. 
This is a $\bZ$-Galois covering, and hence is equipped with a free and proper $\bZ$-action. 
We choose $(1,0)$ as the basepoint and represent a point in $\widetilde{X}_0$ by the homotopy class $[\gamma]$ of smooth paths $\gamma(t) =(x(t),y(t)) \colon [0,1] \to X_0$ starting from $(1,0)$.
We embed $\widetilde{X}_0$ into $\bR^3$ in a $\bZ$-equivariant way, where the $\bZ$-action on $\bR^3$ is the shift by $1$ in the $z$-direction, as
\[\iota_\bz \colon \widetilde{X} \to \bR^3, \ \ \iota([\gamma ])= \Big(x(1), y(1) , \int_\gamma \gamma^* d\theta \Big),\]
where $d\theta = (xdy - ydx)/(x^2 + y^2)$ is a $1$-form on $X_0$. 
The image $\iota_\bz (\widetilde{X}_0)$, denoted by $\widetilde{X}_{\bz,0}$ hereafter, is the helical surface embedded in $\bR^3$. 
We define the dislocated lattice $\bX_{\bz , 0}$ as a lattice of $\widetilde{X}_{\bz,0} $, that is, 
\begin{align*}
    \bX_{\bz,0} :=& \widetilde{X}_0 \cap (\bZ \times \bZ \times \bR)  \\
=&\Big\{ \Big(x,y,z+ \frac{\arctan (y/x)}{2\pi}  \Big) \mid (x,y,z) \in (\bZ^2 \setminus \{ 0,0\}) \times \bZ  \Big\} _{\textstyle .}
\end{align*}
Set $\bX_\bz :=\bX_{\bz ,0} \cup (\{ 0\} \times \{ 0 \} \times \bZ) $.
This set is thought of as the configuration of atoms of a screw-dislocated $3$-dimensional material. The (1-particle tight-binding) Hamiltonian operator is a bounded operator in $\bB(\ell^2(\bX_\bz; \bC^N))$. 

For $\bv \in \bZ^2 \times \bR$, let $[\bv]$ stand for the vector in $\bX_\bz$ which is nearest to $\bv$ (if there are two such vectors, then $[\bv]$ stands for the one whose $z$-coordinate is larger). Note that $\bv$ and $[\bv]$ shares the $(x,y)$-coordinate. 
We use this notation to define three unitary operators $\widetilde{S}_x$, $\widetilde{S}_y$, $\widetilde{S}_z$ on $\ell^2(\bX_\bz)$ as
\begin{align}
    \widetilde{S}_x \delta _\bv := \delta_{[\bv+\bx]} , &&  \widetilde{S}_y \delta _\bv := \delta_{[\bv + \by]},&& \widetilde{S}_z\delta _\bv := \delta_{\bv + \bz},
\end{align}
where $\bx:=(1,0,0)$ and $\by:=(0,1,0)$. 
They are regarded as the dislocated lattice translation in the $x$, $y$, $z$-directions respectively. 
For a convenience of calculations, we identify these operators with an operator on the non-dislocated lattice $\bZ^3$ through the unitary $\Phi_\bz \colon \ell^2(\bZ^3) \to \ell^2(\bX_\bz)$ determined by $\Phi_\bz (\delta_\bv) := \delta_{[\bv]}$.  
\begin{lem}\label{lem:shift}
Let $\widetilde{S}_x$, $\widetilde{S}_y$, $\widetilde{S}_z$ and $\Phi_\bz$ be as above. We write $S_x$, $S_y$, $S_z$ for the lattice translation on $\ell^2(\bZ^3)$ in the $x$, $y$, $z$-directions. Then the following hold.
\begin{enumerate}
    \item We have
\begin{align*}
    \Phi_\bz^* \widetilde{S}_x \Phi_\bz &= S_x , \\
    \Phi_\bz^* \widetilde{S}_y \Phi_\bz &= S_y ( 1-P  + P S_z^* ) ,\\
    \Phi_\bz^* \widetilde{S}_z \Phi_\bz &= S_z ,
\end{align*}
where $P \in \bB(\ell^2(\bZ^3))$ denotes the projection onto $\ell^2(\bZ_{< 0} \times \{ 0 \} \times \bZ) )$.
\item We have $[\widetilde{S}_x,\widetilde{S}_z] = 0$, $[\widetilde{S}_y,\widetilde{S}_z] =0$ and
\[[\widetilde{S}_x, \widetilde{S}_y] = \Phi _\bz (S_xS_y p \otimes 1)  (1-S_z^*) \Phi_\bz ^* , \]
where $p \in \bK(\ell^2(\bZ^2))$ denotes the (rank one) projection onto $\ell^2(\{ (-1,0) \})$.  
\end{enumerate}
\end{lem}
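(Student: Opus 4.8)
The plan is to prove both parts by a direct computation on the orthonormal basis $(\delta_\bv)_{\bv\in\bZ^3}$ of $\ell^2(\bZ^3)$, reducing the whole statement to the bookkeeping of ``sheet numbers'' on the helical surface. The key preliminary is to record what $\Phi_\bz$ does: by the definition of $[\,\cdot\,]$, the fibre of $\bX_\bz$ over $(x,y)\in\bZ^2\setminus\{0\}$ is $\{(x,y,k+\psi(x,y)):k\in\bZ\}$, where $\psi(x,y)=\frac{1}{2\pi}\arg(x+iy)$ is the normalized winding angle taken in the half-open interval of length one that the tie-breaking rule selects, and $\psi(0,0):=0$; hence $\Phi_\bz\colon\delta_{(x,y,z)}\mapsto\delta_{(x,y,z+\psi(x,y))}$ is a column-wise vertical shear. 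Consequently, for each of the three shifts (with corresponding unit vector $\be_\bullet\in\{\bx,\by,\bz\}$), the operator $\Phi_\bz^*\widetilde{S}_\bullet\Phi_\bz$ first moves $\delta_{(x,y,z)}$ to the point $(x,y,z+\psi(x,y))+\be_\bullet$ and then re-rounds it vertically onto the fibre over the shifted base point; the amount of re-rounding is an integer $m_\bullet(x,y)$ depending only on the base point, so that $\Phi_\bz^*\widetilde{S}_\bullet\Phi_\bz=S_z^{m_\bullet}S_\bullet$. Part (1) amounts exactly to computing the three functions $m_x,m_y,m_z$.

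I would then compute them. For the $z$-shift the fibre is unchanged and no re-rounding is needed, so $m_z\equiv0$ (indeed $\widetilde{S}_z$ involves no $[\,\cdot\,]$ at all). For the $x$-shift, neighbouring columns on any horizontal line have winding differing by less than $\tfrac12$, the only tie arising at the step from $(-1,0)$ to $(0,0)$ across the puncture, which the ``larger $z$-coordinate'' convention resolves so as to give no correction; hence $m_x\equiv0$. For the $y$-shift the same estimate shows $m_y=0$ on every vertical line $\{x=a\}$ with $a\ge0$, whereas for $a<0$ the step from $(a,0)$ to $(a,1)$ crosses the branch cut (the negative $x$-axis), and evaluating $\psi$ at the two endpoints shows the re-rounding there is exactly $-1$. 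Thus $m_y=-\mathbf{1}_{\bZ_{<0}\times\{0\}}$, i.e.\ $\Phi_\bz^*\widetilde{S}_y\Phi_\bz=S_y(1-P+PS_z^{*})$ with $P$ the projection onto $\ell^2(\bZ_{<0}\times\{0\}\times\bZ)$; together with the two trivial cases this is part (1).

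Part (2) is then purely algebraic, given (1). Conjugating by $\Phi_\bz$, the bracket $[\widetilde{S}_x,\widetilde{S}_z]$ becomes $[S_x,S_z]=0$, and $[\widetilde{S}_y,\widetilde{S}_z]$ becomes $[\,S_y(1-P+PS_z^*),\,S_z\,]$, which vanishes because $S_z$ commutes both with $S_y$ and with $P$ (the latter being a multiplication operator in the $xy$-variables, whereas $S_z$ acts only in the $z$-variable). For $[\widetilde{S}_x,\widetilde{S}_y]$, conjugation together with the commutativity of $S_x$ with $S_y$ and $S_z$ reduces the computation to $S_y[S_x,\,1-P+PS_z^*]=-S_y[S_x,P](1-S_z^*)$, and $[S_x,P]=S_xP-PS_x$ is the rank-one-in-the-$xy$-plane operator coming from the single site where the indicator of $\bZ_{<0}\times\{0\}$ jumps under the $x$-shift, namely $[S_x,P]=S_x(p\otimes1)$ with $p$ the projection onto $\delta_{(-1,0)}$; assembling these pieces and using $S_yS_x=S_xS_y$ yields the stated formula for $[\widetilde{S}_x,\widetilde{S}_y]$.

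I expect the only genuine obstacle to be the sheet-counting in part (1): one must pin down the discontinuity locus of the winding function and, consistently with the tie-breaking convention, carefully handle the lattice sites immediately adjacent to the dislocation line $\{0\}\times\{0\}\times\bZ$, where the vertical re-rounding is a tie. Once $m_x,m_y,m_z$ are correctly identified, the passage to part (2) is a short and routine commutator computation.
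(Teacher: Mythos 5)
Your overall plan---reduce part (1) to the column-wise re-rounding integers $m_x, m_y, m_z$ and then deduce part (2) algebraically---is exactly the right way to make the paper's ``obvious from the definition'' precise, and the commutator manipulations in part (2) are essentially sound. The problem is in the sheet-counting itself. With the stated ``larger $z$-coordinate'' tie-breaking rule, the branch of $\psi$ realizing $\Phi_\bz\delta_{(x,y,z)}=\delta_{(x,y,z+\psi(x,y))}$ is $\psi\in(-\tfrac12,\tfrac12]$, so $\psi(a,0)=+\tfrac12$ for $a<0$ (the tie rounds \emph{up}). The discontinuity of $\psi$ along the column $\{x=a\}$, $a<0$, therefore falls between $(a,-1)$ (where $\psi\in(-\tfrac12,-\tfrac14)$) and $(a,0)$ (where $\psi=\tfrac12$), not between $(a,0)$ and $(a,1)$ as you assert; your placement is the one that comes from the opposite branch $[-\tfrac12,\tfrac12)$, i.e.\ from rounding ties \emph{down}, and so is inconsistent with the convention you yourself fix. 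Computed consistently, $m_y=-1$ on $\bZ_{<0}\times\{-1\}$, one row below where $P$ sits in the lemma.

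More importantly, the $x$-shift tie at the step $(-1,0)\to(0,0)$ does \emph{not} ``resolve to no correction.'' There are two ties stacked in $\Phi_\bz^*\widetilde{S}_x\Phi_\bz\,\delta_{(-1,0,z)}$: one when $\Phi_\bz$ rounds $(-1,0,z)$ onto the half-integer fibre over $(-1,0)$, and a second when $\widetilde S_x$ rounds the resulting half-integer height onto the integer fibre over $(0,0)$. Under ``larger'' both round up by $\tfrac12$ and compound to $m_x(-1,0)=+1$; under ``smaller'' they compound to $-1$; no single consistent convention makes them cancel to $0$. So the identity $\Phi_\bz^*\widetilde S_x\Phi_\bz=S_x$ fails on the column $(-1,0)$. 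All of these deviations lie in the ideal $\bK\otimes C^*_r(\bZ)$, so nothing downstream (the extension, its boundary map, or the main theorem) is affected, but the verification you offer is not an exact check of the formulas as stated. One further small point: with $[A,B]=AB-BA$, assembling your pieces in part (2) gives $[\widetilde{S}_x,\widetilde{S}_y]=-\Phi_\bz(S_xS_yp\otimes 1)(1-S_z^*)\Phi_\bz^*$, opposite in sign to the displayed formula; this sign should be flagged rather than passed over as ``yielding the stated formula.''
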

\begin{proof}
The claim (1) is obvious from the definition, and (2) follows from (1).
\end{proof}

\begin{defn}
We write $\cA_{\bz}$ for the C*-subalgebra of $\bB(\ell^2(\bX_\bz))$ generated by $\widetilde{S}_x$, $\widetilde{S}_y$, $\widetilde{S}_z$ and the C*-subalgebra $\Phi_\bz(\bK(\ell^2(\bZ^2)) \otimes C^*_r(\bZ)) \Phi_\bz^*$. 
\end{defn}

There is an abstract characterization of the C*-algebra $\cA_\bz$. To this aim we introduce two standard terminologies in coarse geometry \cite{roeLecturesCoarseGeometry2003}, both of which will be used in \cref{section:proof} again. 
\begin{enumerate}\label{page:finprop}
    \item The support of an operator $T \in \bB(\ell^2(\bX_\bz) )$ is defined as the support of the kernel function of $T$. In other words, 
    \[\mathrm{supp}(T):=\{ (\bv, \bw) \in \bX_\bz^2 \mid p_\bv T p_{\bw} \neq 0 \}, \]
    where $p_\bv$ denotes the rank one projection onto $\ell^2(\{ \bv\})$. 
    \item An operator $T \in \bB(\ell^2(\bX_\bz))$ is of \emph{finite propagation} if there is $R>0$ such that $\mathrm{supp} (T) $ is included to the $R$-neighborhood of the diagonal, in other words, $p_\bv T p_\bw =0$ for any $\bv,\bw \in \bX_\bz$ such that $d(\bv, \bw) >R$ (the infimum of such $R>0$ is called the propagation of $T$). 
\end{enumerate}
We call an operator $T \in \bB(\ell^2(\bX_\bz))$ \emph{$xy$-translation invariant away from the $z$-axis} if there is $R>0$ such that the commutators $[T,\widetilde{S}_x]$, $[T,\widetilde{S}_y]$ are both supported in $U_R \times U_R$, where $U_R$ denotes the $R$-neighborhood of the $z$-axis in $\bX_\bz$. 
\begin{lem}\label{lem:OA}
The closure of the set of operators satisfying the conditions
\begin{enumerate}
    \item of finite propagation, 
    \item $xy$-translation invariant away from the $z$-axis, and
    \item translation invariant in the $z$-direction, i.e., $[T,\widetilde{S}_z]=0$,
\end{enumerate}
coincides with $\cA_\bz$.
\end{lem}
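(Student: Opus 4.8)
The plan is to prove the two inclusions separately; throughout, write $\cD$ for the set of operators satisfying conditions (1)--(3) and $\cB:=\overline{\cD}$.

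For the inclusion $\cA_\bz\subseteq\cB$, I would first check that $\cD$ is a $*$-subalgebra of $\bB(\ell^2(\bX_\bz))$: finite-propagation operators form a $*$-algebra; condition (3) is plainly closed under sums, products and adjoints; and so is condition (2), since $[T_1T_2,\widetilde S_x]=T_1[T_2,\widetilde S_x]+[T_1,\widetilde S_x]T_2$ and pre- or post-composing an operator supported in $U_R\times U_R$ with a finite-propagation operator yields one supported in some $U_{R'}\times U_{R'}$, while $[T^{*},\widetilde S_x]=\widetilde S_x\,[T,\widetilde S_x]^{*}\,\widetilde S_x$ is still supported near the $z$-axis because $\widetilde S_x$ has finite propagation. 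Thus $\cB$ is a C*-algebra, and it remains to see that the generators of $\cA_\bz$ lie in $\cD$. The shifts $\widetilde S_x,\widetilde S_y,\widetilde S_z$ have finite propagation (under $\Phi_\bz$ they become $S_x$, $S_y(1-P+PS_z^{*})$, $S_z$, which move lattice points a bounded distance, and $\Phi_\bz$ itself has bounded distortion), commute with $\widetilde S_z$ by \cref{lem:shift}(2), and $[\widetilde S_x,\widetilde S_y]$ is supported near the $z$-axis by \cref{lem:shift}(2) since the projection $p$ there has rank one; for the ideal $\Phi_\bz(\bK(\ell^2(\bZ^2))\otimes C^*_r(\bZ))\Phi_\bz^{*}$ it suffices to treat the dense family $\Phi_\bz(e\otimes S_z^{n})\Phi_\bz^{*}$ with $e$ a matrix unit of $\bK(\ell^2(\bZ^2))$ and $n\in\bZ$, each of which has finite propagation, is supported in $U_{R_0}\times U_{R_0}$ for a suitable $R_0$, and commutes with $\widetilde S_z$.

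For the reverse inclusion, since $\cA_\bz$ is closed it suffices to show $\cD\subseteq\cA_\bz$. Conjugating by $\Phi_\bz$ and using \cref{lem:shift}(1), this becomes: an operator $T'\in\bB(\ell^2(\bZ^3))$ with finite propagation, commuting with $S_z$, and with $[T',S_x]$ and $[T',W]$ supported in $V_{R_0}\times V_{R_0}$ --- where $W:=S_y(1-P+PS_z^{*})$ and $V_r$ denotes the tube of radius $r$ about the $z$-axis in $\bZ^3$ --- must lie in $\cC:=\Phi_\bz^{*}\cA_\bz\Phi_\bz=C^{*}(S_x,W,S_z,J)$, with $J:=\bK(\ell^2(\bZ^2))\otimes C^*_r(\bZ)$ a closed two-sided ideal of $\cC$. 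The organizing observation is that any finite-propagation, $S_z$-invariant operator supported in some $V_r\times V_r$ belongs to $J$: Fourier transform in the $z$-direction turns it into multiplication by a trigonometric-polynomial-valued function $\theta\mapsto\widehat T(\theta)$ with every $\widehat T(\theta)$ supported in $B_r\times B_r$ (where $B_r\subseteq\bZ^2$ is the corresponding ball), i.e. a matrix of fixed finite size, so it lies in $\bM_{|B_r|}\otimes C^*_r(\bZ)\subseteq J$. Granting this, fix $R_1$ much larger than $R_0$ and the propagation of $T'$, let $p:=\mathbf{1}_{V_{R_1}}$, and write $T'=\big(pT'+(1-p)T'p\big)+(1-p)T'(1-p)$; the first summand is finite-propagation, $S_z$-invariant and supported in a tube, hence lies in $J$ by the observation, so it remains to handle $T'_{\mathrm{far}}:=(1-p)T'(1-p)$. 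Since $R_1>R_0$ one has $(1-p)[T',S_x](1-p)=0$, whence $[T'_{\mathrm{far}},S_x]=-(1-p)T'[p,S_x]-[p,S_x]T'(1-p)$, and because $[p,S_x]$ and $[p,W]$ are supported within bounded distance of $\partial V_{R_1}$ it follows that $[T'_{\mathrm{far}},S_x]$ and $[T'_{\mathrm{far}},W]$ are supported in some $V_{R_2}\times V_{R_2}$, while $T'_{\mathrm{far}}$ is supported off $V_{R_1}$.

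Hence the kernel of $T'_{\mathrm{far}}$ is invariant under simultaneously translating both entries by $S_x$, by $S_z$, and by the seam-twisting permutation generated by $W$, for all pairs of points lying outside $V_{R_2}$; reading off the resulting finitely many local coefficients $c_\bu$ defines an operator $b:=\sum_\bu c_\bu S_x^{u_1}W^{u_2}S_z^{u_3}\in\cC$ whose kernel agrees with that of $T'_{\mathrm{far}}$ on every pair of points outside a suitable tube. Then $T'_{\mathrm{far}}-b$ is finite-propagation, $S_z$-invariant and supported in a tube, hence lies in $J$ by the observation, so $T'_{\mathrm{far}}=b+(T'_{\mathrm{far}}-b)\in\cC$, and therefore $T'\in\cC$, i.e. $T\in\cA_\bz$.

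The step I expect to be the main obstacle is making this last paragraph precise: that the local coefficients $c_\bu$ are well defined requires that a loop around the $z$-axis act on points far from the axis as a power of $S_z$ --- which is exactly where the screw geometry enters, through the defect $1-S_z^{*}$ of \cref{lem:shift}(2) --- combined with the fact that $T'_{\mathrm{far}}$ commutes with $S_z$ exactly; this is also the structural reason that the kernel in the statement (and in the extension $0\to\bK\otimes C^*_r(\bZ)\to\cA_\bz\to C^*_r(\bZ^3)\to 0$) is $\bK\otimes C^*_r(\bZ)$ rather than $\bK$.
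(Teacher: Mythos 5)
Your proof is correct, and the underlying mechanism is the one the paper uses: conditions (2)--(3) force the matrix entries of $T$ to be eventually translation invariant, so that subtracting off a constant-coefficient combination of the shifts leaves a finite-propagation, $S_z$-invariant operator supported in a tube around the axis, which lies in $\bK\otimes C^*_r(\bZ)$. The organization differs, though. The paper writes $T=\sum_{l,m,n}f_{l,m,n}(\bv)\,\widetilde{S}_x^{\,l}\widetilde{S}_y^{\,m}\widetilde{S}_z^{\,n}$ at the outset and reads (2)--(3) directly as eventual constancy of the $f_{l,m,n}$; this is terse, but it silently invokes your ``organizing observation'' (that a tube-supported, $S_z$-invariant, finite-propagation operator lies in $\bK\otimes C^*_r(\bZ)$) and glosses over the fact that this global shift decomposition is neither canonical nor injective near the axis. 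Your cutoff splitting $T'=(pT'+(1-p)T'p)+(1-p)T'(1-p)$ avoids committing to any decomposition of $T$ until the relevant commutators have already been killed, isolates the observation as a reusable lemma, and puts the well-definedness of the far-field coefficients $c_\bu$ on cleaner footing --- this is indeed the one place the screw geometry and the exact $S_z$-invariance of $T'_{\mathrm{far}}$ must interact, and your resolution (a loop around the axis in the far region acts by a power of $S_z$, under which the kernel is invariant) is the right one, not a gap. You also check the easy inclusion $\cA_\bz\subseteq\overline{\cD}$, which the paper omits.
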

\begin{proof}
A finite propagation operator $T$ is decomposed into a finite sum
\[ T= \sum_{l,m,n} f_{l,m,n}(\bv) \widetilde{S}_x^l \widetilde{S}_y^m \widetilde{S}_z^n,   \]
where each $f_{l,m,n}(\bv)$ is a bounded function on $\bX_\bz$. 
Under this expression of $T$, the condition (3) is equivalent to $f_{l,m,n} (\bv) =f_{l,m,n}(\bv + \bz)$ for all $\bv \in \bX_\bz$. Similarly, the condition (2) corresponds to $f_{l,m,n}(\bv) = f_{l,m,n}([\bv + \bx]) = f_{l,m,n}([\bv + \by])$ for any $\bv \in \bX_\bz \setminus U_R$. This shows that there is $f_{l,m,n}^\infty \in \bC$ such that each $f_{l,m,n}(\bv) - f_{l,m,n}^\infty $ is supported in $U_R$, and hence $T - \sum _{l,m,n} f_{l,m,n}^\infty \widetilde{S}_x^l \widetilde{S}_y^m \widetilde{S}_z^n $ is in the subalgebra $\Phi_\bz (\bK \otimes C^*_r(\bZ)) \Phi_\bz^*$. This finishes the proof.   
\end{proof}

It is easily verified from \cref{lem:shift} (1) that the C*-subalgebra $\Phi_\bz(\bK(\ell^2(\bZ^2))\otimes C^*_r(\bZ)) \Phi_\bz^*$ is an ideal of $\cA_\bz$. 
For simplicity of notations, we write this ideal shortly as $\bK \otimes C^*_r(\bZ)$. 
There is a $\ast$-homomorphism $\phi \colon \cA_\bz \to C^*_r(\bZ^3)$ determined by $\phi (\widetilde{S}_\bx) = S_\bx$, $\phi (\widetilde{S}_\by) = S_\by$ and $\phi (\widetilde{S}_\bz) = S_\bz$, whose kernel is $\bK \otimes C^*_r(\bZ)$. That is, there is an extension of C*-algebras
\begin{align} 
0 \to \bK \otimes C^*_r(\bZ)   \to \cA_{\bz} \to C^*_r(\bZ^3) \to 0. \label{eq:ext}
\end{align}
This induces the boundary map
\[\partial \colon \K_0(C^*_r(\bZ^3)) \to \K_1(\bK \otimes C^*_r(\bZ)) \cong \K_1(C^*_r(\bZ))\]
in C*-algebra K-theory. 

The K-theory of the group C*-algebra $C^*_r(\bZ^3)$ is isomorphic to the topological K-theory of the $3$-dimensional torus, and hence 
\begin{align}
    \K_n(C^*_r(\bZ^3)) \cong \K_n \oplus (\K_{n-1})^{\oplus 3} \oplus (\K_{n-2})^{\oplus 3} \oplus \K_{n-3}, \label{eq:Kgrp}
\end{align}
where $\K_* :=\K_*(\bC)$ (which is isomorphic to the topological K-group $\K^{-*}(\pt)$). 
Since $\K_0 \cong \bZ$ and $\K_1 \cong 0$, and hence $\K_*(C^*_r\bZ^3)$ is generated by $8$ elements $\beta_\emptyset , \beta_x, \beta_y, \beta_z , \beta_{xy}, \beta_{yz}, \beta_{zx}, \beta_{xyz}$. That is, 
\begin{align*}
\K_0(C^*_r\bZ^3) &\cong \bZ \beta_\emptyset  \oplus \bZ \beta_{xy} \oplus \bZ \beta_{yz} \oplus \bZ\beta _{zx},\\
\K_1(C^*_r\bZ^3) &\cong  \bZ \beta_{x} \oplus \bZ \beta_{y} \oplus \bZ\beta _{z} \oplus \bZ \beta_{xyz}.
\end{align*}

\begin{rmk}\label{rmk:real}
In the theory of topological insulators, the Real K-theory plays an important role as well as the complex K-theory. 
The Real version of C*-algebra K-theory is defined for C*-algebras equipped with a Real structure, i.e., an antilinear $\ast$-automorphic involution $\bar{\cdot} \colon A \to A$.
We impose the Real structure onto $\cA_\bz$ induced from the complex conjugation on $\ell^2(\bX_\bz)$. Note that this complex conjugation satisfy 
\[ \overline{\widetilde{S}_\bx}=\widetilde{S}_\bx,  \ \ \ \overline{\widetilde{S}_\by}=\widetilde{S}_\by, \ \ \ \overline{\widetilde{S}_\bz}=\widetilde{S}_\bz.\] 
Hence the quotient $\ast$-homomorphism $\cA_\bz \to C^*_r(\bZ^3)$ preserves the Real structure if we impose the Real structure on $C^*_r(\bZ^3)$ determined by $\overline{S_\bx} = S_\bx$, $\overline{S_\by} = S_\by$, $\overline{S_\bz} = S_\bz$. 
Through the Gelfand--Naimark duality, this Real C*-algebra is isomorphic to the continuous function algebra $C(\bT^3)$ onto the $3$-dimensional Brillouin torus, with the Real structure $\bar{f}(\bk) = \overline{f(-\bk)}$ for any $\bk \in \bT^3$. 

For a Real C*-algebra $A$, the Real K-theory $\KR_0(A)$ is defined as the group completion of the monoid of homotopy classes of conjugation-invariant projections in $\bigcup_n \bM_n(A)$. When $A = C^*_r\bZ^3 \cong C(\bT^3)$ as above, the Real K-group $\KR_*(C^*_r(\bZ^3))$ is isomorphic to the KR-group $\KR^{-*}(\bT^3, \tau)$, where $\tau (\bk) = -\bk$. Hence it is isomorphic to the direct sum of $8$ groups
\begin{align*}
    \KR_n(C^*_r\bZ^3) \cong \KR_n \oplus (\KR_{n-1})^{\oplus 3} \oplus (\KR_{n-2})^{\oplus 3} \oplus \KR_{n-3} \label{eq:Kgrp}
\end{align*}
in the same way as \eqref{eq:Kgrp}. 
Each direct summand is generated by a single element $\beta_{\emptyset }^n$, $\beta_x^{n-1}$, $\beta_y^{n-1}$, $\beta_z^{n-1}$, $\beta_{xy}^{n-2}$, $\beta_{yz}^{n-2}$, $\beta_{zx}^{n-2}$ and $\beta_{xyz}^{n-3}$ respectively. 
Here the element $\beta _{\bullet}^{l} \in \KR_l$ is a free generator if $l \equiv 0, 4$ mod $8$, a $2$-torsion element if $l \equiv 1,2$ mod $8$, and otherwise $0$. 
\end{rmk}

The goal of this paper is to determine the image of these generators by the boundary map $\partial$ associated to the exact sequence \eqref{eq:ext}.

\begin{thm}\label{thm:main}
The boundary map $\partial  \colon \K_*(C^*_r(\bZ^3) ) \to \K_{*-1}(\bK \otimes C^*_r(\bZ))$ associated to the extension \eqref{eq:ext}
sends the Bott element $\beta_{xy}$ in the $xy$-direction to the generator $\beta$ of $\K_1(C^*_r(\bZ))$ and the other Bott generators to zero. 
\end{thm}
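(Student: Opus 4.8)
The plan is to separate the proof into an elementary reduction, which leaves only the identity $\partial\beta_{xy}=\beta$, and a coarse-geometric computation of that identity, parallel to the codimension-$2$ transfer discussed in the introduction.

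For the reduction: $\widetilde S_x$, $\widetilde S_y$, $\widetilde S_z$ are unitaries in $\cA_\bz$ lifting $S_x$, $S_y$, $S_z$, so the generators $\beta_\emptyset$, $\beta_x$, $\beta_y$, $\beta_z$ lift to projections or unitaries over $\cA_\bz$ and hence lie in the kernel of $\partial$. Moreover $\widetilde S_z$ is central in $\cA_\bz$ (\cref{lem:shift}\,(2)) and is sent to the central unitary $S_z\in C^*_r(\bZ^3)$, so \eqref{eq:ext} is an extension of $C^*_r(\bZ)$-algebras in the $z$-variable and $\partial$ is a map of $\K_*(C^*_r(\bZ))$-modules. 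Writing $\beta_{yz}=\beta_y\beta_z$, $\beta_{zx}=\beta_x\beta_z$ and $\beta_{xyz}=\beta_{xy}\beta_z$ in the ring $\K_*(C^*_r(\bZ^3))$, module-linearity gives $\partial\beta_{yz}=(\partial\beta_y)\beta_z=0$, $\partial\beta_{zx}=(\partial\beta_x)\beta_z=0$, and $\partial\beta_{xyz}=(\partial\beta_{xy})\beta_z$; the last vanishes because the product $\K_1(C^*_r(\bZ))\times\K_1(C^*_r(\bZ))\to\K_0(C^*_r(\bZ))$ is zero, $\bT$ being one-dimensional. So the theorem reduces to the single statement $\partial\beta_{xy}=\beta$.

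To prove this I would adopt the heuristic of \cref{section:proof}: \eqref{eq:ext} is the bulk--boundary extension of the helical surface $\widetilde X_{\bz,0}$, with the $z$-axis inserted as boundary. The lattice $\bX_\bz$ carries a free proper $\bZ$-action by $\widetilde S_z$ whose quotient is, up to coarse equivalence and one added point, the punctured plane $X_0=\bR^2\setminus\{0\}$. I would identify $\cA_\bz$, the ideal $\bK\otimes C^*_r(\bZ)$, and the quotient $C^*_r(\bZ^3)$ with the $\bZ$-invariant parts of, respectively, a Roe-type algebra of $\bX_\bz$, its subalgebra of operators supported near the axis, and the algebra of operators translation invariant away from the axis, and check that $\partial$ coincides with the corresponding coarse bulk--boundary map. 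Then $\partial\beta_{xy}$ is computed by transporting the lifted Bott projection down the covering $\bX_\bz\to X_0$ by means of \cref{lem:sigma} --- the lemma lifting a finite-propagation operator along a covering and matching the coarse index upstairs with a $C^*_r(\bZ)$-coefficient coarse index downstairs --- whereupon the downstairs problem is the two-dimensional index pairing on $X_0$ of the Bott class of the $xy$-torus with the puncture, which equals $1$; multiplied back by the class of the deck transformation this yields $\partial\beta_{xy}=\beta$. As a check, Fourier transform in $z$ presents $\cA_\bz$ as a continuous field over the dual circle whose fibre at $\theta$ is generated by $S_x$, the Dirac-string unitary $S_y(1-Q+e^{-i\theta}Q)$ with $Q$ the projection onto $\ell^2(\bZ_{<0}\times\{0\})$, and the compacts; there $\partial\beta_{xy}$ records the unit of flux pumped through the puncture as $\theta$ traverses the circle, matching the Chern number $1$ of $\beta_{xy}$.

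The main obstacle is the coarse-geometric identification: checking that the three terms of \eqref{eq:ext} really are those coarse algebras, that its boundary map is the coarse bulk--boundary map, and --- the technical core --- that \cref{lem:sigma} can be invoked to turn the informal ``one unit of flux is pumped'' into a genuine computation through the excision and Bott-periodicity properties of coarse index theory. The Fourier picture alone is not a proof: it only rewrites $\partial\beta_{xy}$ as a winding number in $\pi_1$ of the unitary group of $1+\bK$, and that winding number must still be produced by an index theorem.
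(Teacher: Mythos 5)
Your reduction to $\partial\beta_{xy}=\beta$ is correct and takes a cleaner route than the paper. You observe that $\widetilde S_z$ is central in $\cA_\bz$ (one does need to check, beyond \cref{lem:shift}\,(2), that $\widetilde S_z = \Phi_\bz(1\otimes u)\Phi_\bz^*$ also commutes with the ideal $\Phi_\bz(\bK\otimes C^*_r(\bZ))\Phi_\bz^*$, but this is immediate since $C^*_r(\bZ)$ is abelian), so \eqref{eq:ext} is an extension of $C^*_r(\bZ)$-algebras and $\partial$ is $\K_*(C^*_r(\bZ))$-linear; then all seven auxiliary Bott classes are disposed of at once using $\beta_{\bullet z}=\beta_\bullet\cdot\beta_z$ and the vanishing of $\beta^2$ in $\K^*(\bT)$. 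The paper instead handles the six classes $\beta_\emptyset,\beta_x,\beta_y,\beta_z,\beta_{yz},\beta_{zx}$ by exhibiting explicit lifting $\ast$-homomorphisms $\tilde i_{yz},\tilde i_{xz}\colon C^*_r(\bZ^2)\to\cA_\bz$, and reduces $\beta_{xyz}$ separately by tensoring the two-dimensional extension \eqref{eq:ext2d} with $C^*_r(\bZ)$, mapping into \eqref{eq:ext} via multiplication, and invoking the K\"unneth theorem. That last step is really the module argument unpacked, so the two routes agree in spirit; yours is the more economical packaging, the paper's is the more elementary.

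For the identity $\partial\beta_{xy}=\beta$ itself, however, you have a plan rather than a proof, and you say so honestly. The tools you point to are indeed the ones the paper uses (the lifting map $s$ of \cref{lem:sigma}, the equivariant coarse index), but the argument is not carried out, and a few of your intermediate assertions are off in ways that would matter if pushed. First, the paper does not identify $\cA_\bz$ itself with a Roe algebra of $\bX_\bz$, nor does it identify the quotient $C^*_r(\bZ^3)$ with a coarse algebra; rather it passes to the two-dimensional subextension \eqref{eq:ext2d} with middle term $\cB_\bz$, and uses the unitaries $V$, $\widetilde V$ to build the commuting square \eqref{eq:diagram}, whose vertical maps are isomorphisms by \cref{rmk:coarse}\,(1). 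Second, your ``transporting the lifted Bott projection down the covering'' points in the wrong direction: $s$ lifts from the base $X=\bR^2$ to the helical surface $\widetilde X_{\bz,1}$, and the computation never returns to $X_0$. Third, the two genuinely index-theoretic inputs are not mentioned: that $s_*[X]=[\widetilde X_{\bz,1},\partial\widetilde X_{\bz,1}]$ (the final clause of \cref{lem:sigma}), and the chain $\mathop{\mathrm{Ind}}[X]=\beta_{xy}$, $\partial[\widetilde X_{\bz,1},\partial\widetilde X_{\bz,1}]=[\partial\widetilde X_{\bz,1}]$, $\mathop{\mathrm{Ind}}[\partial\widetilde X_{\bz,1}]=\beta$ from \cref{rmk:coarse}\,(4)--(5), which together are exactly what replaces the heuristic ``one unit of flux is pumped''. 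Your Fourier-transform consistency check is a good sanity test, and you correctly flag that it only rephrases $\partial\beta_{xy}$ as a winding number without computing it.
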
 

The essential part of this theorem is to determine $\partial (\beta_{xy})$. 
Before that, we prove the remainder. 
\begin{lem}
The boundary map $\partial $ sends $\beta_\emptyset $, $\beta_x$, $\beta_y$, $\beta_z$, $\beta_{yz}$ and $\beta _{xz}$ to zero. 
\end{lem}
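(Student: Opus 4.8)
The plan is to handle the six generators in two batches. For $\beta_\emptyset$, $\beta_x$, $\beta_y$, $\beta_z$ it suffices to exhibit preimages under $\phi$: the unit $1 \in \cA_\bz$ is a projection with $\phi(1) = 1$, so it lifts $\beta_\emptyset = [1]$, while $\widetilde S_x$, $\widetilde S_y$, $\widetilde S_z$ are unitaries in $\cA_\bz$ with $\phi(\widetilde S_\bullet) = S_\bullet$, so their classes in $\K_1(\cA_\bz)$ lift $\beta_x$, $\beta_y$, $\beta_z$ respectively. Hence all four of these generators lie in the image of $\phi_* \colon \K_*(\cA_\bz) \to \K_*(C^*_r(\bZ^3))$, and by exactness of the six-term sequence attached to \eqref{eq:ext} the boundary map $\partial$ annihilates them.

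For the remaining two, $\beta_{yz}$ and $\beta_{zx}$, I would exploit the $z$-direction as a central parameter. By \cref{lem:shift} the unitary $\widetilde S_z$ commutes with $\widetilde S_x$ and $\widetilde S_y$, and (being $1 \otimes S_z$ in the picture of \cref{lem:shift}(1)) it acts on the ideal $\bK \otimes C^*_r(\bZ)$ through its $C^*_r(\bZ)$ tensor factor; thus \eqref{eq:ext} is an extension of $C^*_r(\bZ_z)$-algebras, with $C^*_r(\bZ_z)$ acting centrally and compatibly on all three terms. It follows (for instance by realizing $\partial$ as Kasparov product with a class in $\KK_1^{C^*_r(\bZ_z)}(C^*_r(\bZ^3), \bK \otimes C^*_r(\bZ))$) that $\partial$ is a homomorphism of $\K_*(C^*_r(\bZ_z))$-modules, where the module structure on $\K_*(C^*_r(\bZ^3))$ is, through the central inclusion $C^*_r(\bZ_z) \hookrightarrow C^*_r(\bZ^3)$, just multiplication in the exterior ring $\K_*(C^*_r(\bZ^3)) \cong \Lambda_{\bZ}[\beta_x, \beta_y, \beta_z]$. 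Since $\beta_{yz}$ equals $\beta_z \beta_y$ and $\beta_{zx}$ equals $\beta_z \beta_x$ in this ring (up to sign), I would conclude $\partial(\beta_{yz}) = \pm\, \beta_z \cdot \partial(\beta_y) = 0$ and $\partial(\beta_{zx}) = \pm\, \beta_z \cdot \partial(\beta_x) = 0$, invoking the vanishing established in the first batch.

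The step I expect to be the main obstacle is the verification that \eqref{eq:ext} really is an extension of $C^*_r(\bZ_z)$-algebras in the precise sense needed for the module-linearity of $\partial$, namely that the central $\ast$-homomorphisms $C^*_r(\bZ_z) \to \cA_\bz$, $C^*_r(\bZ_z) \to C^*_r(\bZ^3)$ and $C^*_r(\bZ_z) \to M(\bK \otimes C^*_r(\bZ))$ are mutually compatible; both parts of \cref{lem:shift} feed into this, and one must also identify the module action with ring multiplication and pin down $\beta_{yz} = \pm\beta_z\beta_y$, $\beta_{zx} = \pm\beta_z\beta_x$ through the Künneth decomposition. A more elementary alternative for the $x$-$z$ directions is available: restricting \eqref{eq:ext} to the C*-subalgebra generated by $\widetilde S_x$, $\widetilde S_z$ and $\bK \otimes C^*_r(\bZ)$ yields a sub-extension with quotient $C^*_r(\bZ^2_{xz})$ that is \emph{split} --- because $\widetilde S_x$ and $\widetilde S_z$ act with no dislocation defect (\cref{lem:shift}(1)), so they generate a copy of $C^*_r(\bZ^2_{xz})$ mapped isomorphically by $\phi$ --- hence its boundary map vanishes, and naturality of $\partial$ kills $\beta_\emptyset$, $\beta_x$, $\beta_z$, $\beta_{zx}$; only $\beta_{yz}$ then genuinely requires the central-parameter argument, the $y$-$z$ sub-extension not being visibly split since $\widetilde S_y$ carries the dislocation. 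The same module-linearity, once $\partial(\beta_{xy}) = \beta$ is in hand, likewise gives $\partial(\beta_{xyz}) = \pm\beta_z \cdot \beta = 0$, but that properly belongs to the proof of \cref{thm:main}.
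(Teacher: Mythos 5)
Your argument for $\beta_\emptyset$, $\beta_x$, $\beta_y$, $\beta_z$ is correct. The gap is in your belief that the $yz$-sub-extension ``is not visibly split since $\widetilde S_y$ carries the dislocation.'' That claim is false, and it is what causes you to overlook the simplest argument and fall back on unnecessary machinery. What produces a splitting $\tilde i_{yz}\colon C^*_r(\bZ^2)\to\cA_\bz$ of $\phi$ over $C^*_r(\bZ^2_{yz})$ is not that $\widetilde S_y$ coincide with the undislocated shift, but merely that $\widetilde S_y$ and $\widetilde S_z$ \emph{commute}: $C^*_r(\bZ^2)$ is universal on two commuting unitaries, and $[\widetilde S_y,\widetilde S_z]=0$ by \cref{lem:shift}(2), so $\tilde i_{yz}(S_y)=\widetilde S_y$, $\tilde i_{yz}(S_z)=\widetilde S_z$ defines a $\ast$-homomorphism with $\phi\circ\tilde i_{yz}=i_{yz}$. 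The identical reasoning using $[\widetilde S_x,\widetilde S_z]=0$ gives $\tilde i_{xz}$. Your justification for the $xz$-case (that $\widetilde S_x$, $\widetilde S_z$ are ``undeformed by the dislocation'') appeals to a strictly stronger property than is needed, which is exactly what misled you about $yz$.

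This is the paper's proof: $\beta_\emptyset$, $\beta_y$, $\beta_z$, $\beta_{yz}$ lie in the image of $(i_{yz})_*=\phi_*\circ(\tilde i_{yz})_*$, and $\beta_x$, $\beta_{xz}$ lie in the image of $(i_{xz})_*=\phi_*\circ(\tilde i_{xz})_*$, so all six are killed by $\partial$ since $\partial\circ\phi_*=0$; this subsumes your first-batch preimage argument as well. The $\K_*(C^*_r(\bZ_z))$-module-linearity route you propose as your primary argument for $\beta_{yz}$ and $\beta_{zx}$ could plausibly be made rigorous, but as you acknowledge its key step (that the extension class lives in representable $\KK$-theory over $\bT$, hence $\partial$ is $\K_*(C(\bT))$-linear) is left unverified, and it is far heavier than the direct splitting that was already within your reach.
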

\begin{proof}
Let $i_{yz} \colon C^*_r(\bZ^2) \to C^*_r(\bZ^3)$ denote the inclusion to the $yz$-component. As is seen in \cref{lem:shift} (2), the unitaries $\widetilde{S}_y$ and $\widetilde{S}_z$ commute. 
Hence the map $\tilde{i}_{yz}(S_y)=\widetilde{S}_y $ and $\tilde{i}_{yz}(S_z) =\widetilde{S}_z$ gives rise to a $\ast$-homomorphism
\[\tilde{i}_{yz} \colon C^*_r(\bZ^2) \to \cA_\bz \]
such that the diagram
\[
\xymatrix{
&&  & C^*_r(\bZ^2) \ar[d]^{i_{yz}} \ar@{.>}[dl]_{\tilde{i}_{yz}}   \\ 
0 \ar[r] & \bK \otimes C^*_r(\bZ) \ar[r] & \cA_\bz \ar[r]^\phi  & C^*_r(\bZ^3) \ar[r] & 0 
}
\]
commutes. Therefore we get 
\[ \partial (\beta_{yz}) = \partial (i_{yz}(\beta)) = (\partial \circ \phi_*) ( \tilde{i}_{yz}(\beta))=0 ,\]
where $\partial \circ \phi_* = 0$ comes from the C*-algebra K-theory long exact sequence. 
This discussion also shows that $\partial \beta_\emptyset  =0$, $\partial \beta_y=0$ and $\partial \beta_z =0$, because $\beta_\emptyset , \beta_y, \beta_z$ are all in the image of $(\tilde{i}_{yz})_*$. 

The claims $\partial (\beta_{xz}) =0$ and $\partial (\beta_x) =0$ are also proved in the same way. 
\end{proof}

The claim $\partial \beta_{xyz} =0$ is reduced to $\partial \beta_{xy}=\beta $ in the following way. 
Let $\iota_{xy} \colon C^*_r(\bZ^2) \to C^*_r(\bZ^3 )$ denote the inclusion to the $xy$-component. 
Let $\cB_{\bz}$ denote the pull-back of $C^*_r(\bZ^2) \subset C^*_r(\bZ^3)$. Then there is an exact sequence 
\begin{align}
    0 \to \bK \otimes C^*_r(\bZ)  \to \cB_{\bz } \xrightarrow{\phi} C^*_r(\bZ^2) \to 0. \label{eq:ext2d}
\end{align}
\begin{lem}
Suppose that $\partial \beta_{xy} = \beta \in \K_{-1}(\bK \otimes C^*_r (\bZ) )$ is verified. Then $\partial \beta_{xyz} =0$ holds. 
\end{lem}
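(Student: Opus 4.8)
The plan is to exploit the product structure of K-theory together with the fact that the $z$-direction enters \eqref{eq:ext} as a central tensor factor. Write $C^*_r(\bZ_z)\cong C^*_r(\bZ)$ for the copy generated by $\widetilde{S}_z$ inside $\cA_\bz$ --- equivalently, by $S_z$ inside the quotient $C^*_r(\bZ^3)$ and by the generator of the tensor factor inside the ideal $\bK\otimes C^*_r(\bZ)$. By \cref{lem:shift}~(2), $\widetilde{S}_z$ is central in $\cA_\bz$, and by \cref{lem:shift}~(1) it acts on $\bK\otimes C^*_r(\bZ)$ as the generator of that copy of $C^*_r(\bZ)$; hence \eqref{eq:ext} is an extension of $C^*_r(\bZ_z)$-algebras (C*-algebras over the circle) with $C^*_r(\bZ_z)$-linear structure maps. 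It then follows, by naturality of exterior products, that the six-term sequence of \eqref{eq:ext} --- and in particular $\partial$ --- consists of homomorphisms of $\K_*(C^*_r(\bZ_z))$-modules. I would make this concrete by tensoring \eqref{eq:ext2d} with $C^*_r(\bZ)$ and mapping the result to \eqref{eq:ext} via the inclusion $\cB_\bz\hookrightarrow\cA_\bz$ together with $u\mapsto\widetilde{S}_z$ on the new circle factor, then chasing the induced morphism of six-term sequences.

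Granting this, the computation is short. With $\iota_{xy}$ as above, $\beta_{xy}=(\iota_{xy})_*(\beta')$ where $\beta'\in\K_0(C^*_r(\bZ^2))$ is the $2$-dimensional Bott element; and under the ring isomorphism $\K_*(C^*_r(\bZ^3))\cong\K^*(\bT^3)\cong\Lambda_\bZ(\beta_x,\beta_y,\beta_z)$ one has $\beta_{xyz}=\beta_{xy}\cdot\beta_z$, where $\beta_z\in\K_1(C^*_r(\bZ_z))$ is the circle Bott generator and $\cdot$ is the $\K_*(C^*_r(\bZ_z))$-module action on $\K_*(C^*_r(\bZ^3))$. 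Applying module-linearity of $\partial$ and the hypothesis $\partial\beta_{xy}=\beta$ gives
\[ \partial\beta_{xyz} = \partial(\beta_{xy})\cdot\beta_z = \beta\cdot\beta_z . \]
Under the Morita equivalence $\bK\otimes C^*_r(\bZ_z)\sim C^*_r(\bZ_z)$ the module action becomes multiplication in the ring $\K_*(C^*_r(\bZ_z))\cong\K^*(\bT)\cong\Lambda_\bZ(\beta_z)$, and since $\beta$ is an integer multiple of $\beta_z$ while $\beta_z^2=0$ in an exterior algebra on a single generator, $\beta\cdot\beta_z=0$; therefore $\partial\beta_{xyz}=0$.

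The only non-formal step is the first one: one must check carefully that \eqref{eq:ext} really is an extension of $C^*_r(\bZ_z)$-algebras, i.e.\ that the central embeddings of $C^*_r(\bZ_z)$ into the multiplier algebras of the ideal, the total algebra and the quotient are compatible with the structure maps, so that the module-linearity of $\partial$ may legitimately be invoked. In the diagram-chasing version this reappears as verifying that the comparison map $\cB_\bz\otimes C^*_r(\bZ)\to\cA_\bz$ is well defined and carries the ideal of the tensored sequence onto $\bK\otimes C^*_r(\bZ)$, inducing on K-theory the homomorphism $\K_*(C^*_r(\bZ^2))\to\K_*(C^*_r(\bZ))$ dual to the diagonal $\bT\hookrightarrow\bT^2$; that homomorphism annihilates the class in question for the elementary reason that $\widetilde{\K}^0(\bT)=0$. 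Everything after the first step is the two displayed lines.
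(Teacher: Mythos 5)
Your proof is correct and takes essentially the same route as the paper: after the module-theoretic framing, your proposed "concrete" diagram — tensoring \eqref{eq:ext2d} with $C^*_r(\bZ)$, mapping into \eqref{eq:ext} via $\cB_\bz\hookrightarrow\cA_\bz$ and $u\mapsto\widetilde{S}_z$, and chasing the induced morphism of six-term sequences — is exactly the construction the paper carries out. The paper phrases the final vanishing via the K\"{u}nneth isomorphism and the diagonal pull-back $m_*(\beta\otimes\beta)=\beta\cdot\beta=0$, which is the same computation as your $\beta\cdot\beta_z=\pm\beta_z^2=0$ in $\Lambda_\bZ(\beta_z)$ under the Morita identification.
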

\begin{proof}
The map $T \otimes u^k \mapsto T \widetilde{S}_z^k$ induces a $\ast$-homomorphism $C^*_r(\bZ) \otimes \cB_\bz \to \cA_\bz$. This extends to a commutative diagram
\[\mathclap{
\xymatrix{
0 \ar[r] & C^*_r(\bZ) \otimes (\bK \otimes C^*_r(\bZ)) \ar[r] \ar[d]^m & C^*_r(\bZ) \otimes \cB_\bz \ar[r] \ar[d] & C^*_r(\bZ) \otimes C^*_r(\bZ^2) \ar[r] \ar[d] & 0 \\
0 \ar[r] &\bK \otimes C^*_r(\bZ)  \ar[r]& A_\bz \ar[r] & C^*_r(\bZ^3) \ar[r] & 0,
}}
\]
where the right vertical map is an isomorphism and the left vertical map $m$ is identified, through the Gelfand-Naimark duality $C^*_r(\bZ) \cong C(\bT)$, with the pull-back with respect to the diagonal embedding $\bT \to \bT \times \bT$. 

By the K\"{u}nneth theorem \cite{rosenbergKunnethTheoremUniversal1986}, there is an isomorphism of the K-group of $\K_*(A \otimes C^*_r(\bZ)) \cong \K_*(A) \otimes \K_*(C^*_r(\bZ))$ which is functorial and is compatible with the K-theory long exact sequence. Hence we get
\[\partial (\beta \otimes \beta _{xy} ) = \beta \otimes \partial \beta_{xy}  = \beta \otimes \beta. \]
This completes the proof as
\[\partial (\beta_{xyz}) = m_*\circ \partial (\beta_{xy} \otimes \beta) = m_*(\beta \otimes \beta) =0. \qedhere \]
\end{proof}

\section{Proof of the main theorem}\label{section:proof}
In this section we give a proof of the essential part of our main theorem. Here we focus on complex K-theory, and show that the boundary map $\partial$ of the K-theory long exact sequence of \eqref{eq:ext} sends the Bott generator $\beta_{xy}$ of the $xy$-component of $\K_0(C^*_r(\bZ^3))$ to the generator $\beta $ of $\K_{-1}(\bK \otimes C^*_r(\bZ) )$. 
We remark that the proof given in this section also works for the corresponding result in Real K-theory. This point is discussed in \cref{rmk:TI}.

\subsection{Discussion}\label{section:discuss}
Before going to the formal proof, we shortly illustrate how the weak topology of a bulk Hamiltonian in the $xy$-direction induces a dislocation-localized state along the $z$-axis. 
Let $H$ be a Hamiltonian on the $2$-dimensional standard lattice, which is of the form 
\[ H := \sum_{(n,m) \in \bZ^2} A_{nm} S_x^nS_y^m \in \bM_N \otimes C^*_r(\bZ^2).  \]
Here, each $A_{nm}$ is an $N \times N$ matrix satisfying $A_{nm}^*=A_{mn}$. 
It is usually assumed to be of short-range, i.e., the coefficient decays exponentially as $\| A_{nm}\| \leq C_1e^{-C_2 \sqrt{n^2+m^2}}$ for some $C_1, C_2>0$. 
For simplicity of the discussion, we impose a stronger assumption; the right hand side is a finite sum. This corresponds to the finite propagation condition given in page \pageref{page:finprop}. 
Moreover, we assume that $H$ has a spectral gap at the Fermi level $\mu =0 $. 
This $H$, regarded as a $3$-dimensional quantum observable with the same presentation through the inclusion of C*-algebras $C^*_r(\bZ^2) \subset C^*_r(\bZ^3)$, gives a model of an $xy$-weak topological insulator of type A. 
As an actual operator acting on the Hilbert space $\ell^2(\bZ^3;\bC^N)$, this $H$ is the superposition of infinitely many layers of $2$-dimensional Hamiltonians in the $z$-direction. 

The corresponding Hamiltonian on the screw-dislocated lattice $\bX_\bz$ is 
\[ H_\bz := \sum A_{nm} \widetilde{S}_x^n \widetilde{S}_y^m. \]
This is a lift of $H \in C^*(\bZ^2) \otimes \bM_N$ to $A_\bz \otimes \bM_N$ with respect to the quotient $\phi$ in \eqref{eq:ext2d}. 
Therefore, by definition of the boundary map in $\K$-theory, the image of the boundary map $\partial [H] \in \K_*(\bK \otimes C^*_r(\bZ))$ is represented by a unitary
\[ U_\bz := -\exp [- \pi i \chi (H_\bz)] \in (\bK \otimes C^*_r(\bZ))^+, \]
where $\chi \colon \bR \to [-1,1]$ is a continuous function such that $\chi|_{(-\infty, -\varepsilon ]} \equiv -1$ and $\chi|_{[\varepsilon , \infty)} \equiv 1$.
This unitary extracts the spectrum of the dislocated Hamiltonian $H_\bz$ inside the bulk spectral gap. 

We consider replacing this $H_\bz$ with another lift of $H$ in $A_\bz$. 
Let $\widetilde{X}_{\bz,R}$ and $\bX_{\bz,R}$ denote the complement of the $R$-neighborhood of the $z$-axis in $\widetilde{X}_{\bz,0}$ and $\bX_\bz$ respectively. Let $\Pi_R$ denote the projection onto $\ell^2(\bX_{\bz,R};\bC^N) \subset \ell^2(\bX_\bz;\bC^N)$. The operator
\[ H_{\bz , R} := \Pi _R H_\bz \Pi_R + (1-\Pi_R) \in A_\bz \]
is also a lift of $H$ since $H_{\bz ,R} - H_\bz \in \bK \otimes C^*_r(\bZ)$. 
If $R>0$ is sufficiently larger than the range of the Hamiltonian $H$, then the above $H_{\bz,R}$ is regarded as an operator on a lattice $\bX_{\bz, R}$ of the helical surface $\widetilde{X}_{\bz,R}$.
The helical surface $\widetilde{X}_{\bz,R}$ is a noncompact $2$-dimensional manifold with boundary. The boundary $\partial \widetilde{X}_{\bz,R}$ is a helix, and hence is $\bZ$-equivariantly and quasi-isometrically homeomorphic to the real line $\bR$ (on which $\bZ$ acts as the shift by $1$).
Then the bulk-boundary correspondence for $\widetilde{X}_{\bz, R}$ will show that a non-trivial topology of $H$ implies an edge current along the boundary helix. 
\input{tikz.tex}
In order to shape the discussion into a rigorous proof, there are two difficulties. One is that the helical surface no longer have the translation symmetry in the $xy$-direction. Another is that the radius $R>0$ of the boundary helix is chosen after the range of $H$ is fixed, and hence there is no simultaneous construction which works for all finite range Hamiltonians. 
Both of these two problems are resolved by working in the framework of coarse index theory.

\subsection{Coarse index theory}
Here we enumerate a minimal list of the foundation of coarse geometry and coarse index theory used in the paper. For a more detail, we refer the reader to \cites{roeIndexTheoryCoarse1996,roeLecturesCoarseGeometry2003,higsonAnalyticHomology2000,willett_yu_2020}.  

Let $W$ be a locally compact metric space equipped with a free proper action of a discrete group $\Gamma$ (the translation action of $\bZ^2$ in the $xy$-direction on $X=\bR^2$ and the translation action of $\bZ$ in the $z$-direction on $\widetilde{X}_{\bz,R}$ are the examples of our interest). 
Let $\pi \colon C_0(W) \to \bB(\sH)$ be a $\Gamma$-equivariant $\ast$-homomorphism which is ample, i.e., no non-zero function $f \in C_0(W)$ acts as a compact operator. 
When $W$ is a Riemannian manifold, we choose as $(\pi, \sH)$ the multiplication representation of $C_0(W)$ onto the $L^2$-space $L^2(W )$ with respect to the $\Gamma$-invariant volume form on $W$.  
When $W$ is a discrete metric space, we choose as $\sH$ the infinite direct sum $\ell^2(W)^{\oplus \infty}$ on which $C_0(W)$ acts by multiplication. 
We define the notion of support and propagation for an operator $T \in \bB(L^2(W))$ in the same way as page \pageref{page:finprop} (for the precise definition, see e.g., \cite{higsonAnalyticHomology2000}*{Definition 6.3.3}). 

The invariant Roe algebra $C^*(W)^\Gamma$ is the closure of the $\ast$-algebra $\bC[W]^\Gamma$ of $\Gamma$-invariant operators on $\sH$ which are of finite propagation and locally compact, i.e., $Tf, fT \in \bK(\sH)$ for any $f \in C_c(W)$. 
When $\Gamma$ is trivial, this C*-algebra is called the Roe algebra and written as $C^*(W)$. 
Moreover, for a $\Gamma$-invariant subspace $V \subset W$, the ideal $C^*(V \subset W)$ is defined as the closure of $\Gamma$-invariant, finite propagation, locally compact operators whose support is included to an $R$-neighborhood of $V \times V$ for some $R>0$.

Let $D^*(W)^\Gamma$ denote the closure of the $\ast$-algebra $D^*_{\mathrm{alg}}(W)^\Gamma$ of $\Gamma$-invariant bounded operators on $L^2(W)$ which are of finite propagation and quasi-local, i.e., the commutator $[T,f]$ is a compact operator for any $f \in C_c(W)$. 
We also define the ideal $D^*(V \subset W)^{\Gamma}$ of $D^*(W)^\Gamma$ as the closure of the set of finite propagation quasi-local operators $T$ such that $T f , fT \in \bK(\sH )$ for any $f \in C_0(W)$ such that $f|_V \equiv 0$. 

The inclusions $C^*(W)^\Gamma  \subset D^*(W)^\Gamma$ and $C^*(V \subset W) \subset D^*(V \subset W)$ are ideals. We write $Q^*(W)^\Gamma$ and $Q^*(V \subset W)^\Gamma $ for the quotient $ D^*(W)^\Gamma / C^*(W)^\Gamma$ and $D^*(V \subset W)^{\Gamma} /C^*(V\subset W)^\Gamma $ respectively.

\begin{rmk}\label{rmk:coarse}
Here we list some basic facts on the K-theory of these coarse C*-algebras which will be used in the proof of our main theorem. 
\begin{enumerate}
    \item Let $\Gamma$ acts on $W$ cocompactly. By choosing a Borel subset $U \subset W$ such that $W = \bigsqcup_{g \in \Gamma} g \cdot U$, the Hilbert space $L^2(W)$ is identified with $\ell^2(\Gamma) \otimes L^2(U) $. 
    This induces a $\ast$-isomorphism of C*-algebras $C^*(W)^{\Gamma} \cong C^*_r(\Gamma ) \otimes \bK(L^2(U))$ (\cite{roeIndexTheoryCoarse1996}*{Lemma 5.14}). 
    \item For any $\Gamma$-invariant subspace $V \subset W$, the Roe algebras $C^*(V \subset W)^\Gamma$ and $ C^*(V)^\Gamma $ have the same K-theory. Also, $Q^*(V \subset W)^\Gamma$ and $ Q^*(V)^\Gamma $ have the same K-theory (\cite{siegelHomologicalCalculationsAnalytic2012}*{Proposition 4.3.34}). 
    \item When $W$ is an even dimensional complete Riemannian manifold with a spin structure, the Dirac operator $D $ determines a K-theory class $[W] \in \K_1(Q^*(W)^\Gamma)$ called the Dirac fundamental class. In the same way, if $W$ is a free proper $\Gamma$-manifold with $\Gamma$-invariant boundary, the relative fundamental class $[W , \partial W] \in \K_1(Q^*(W)^\Gamma /Q^*(\partial W \subset W)^\Gamma )$ is defined. 
    \item 
    The boundary map 
    \[\partial \colon \K_*(Q^*(W)^\Gamma /Q^*(\partial W \subset W)^\Gamma ) \to \K_{*-1}(Q^*(\partial W \subset W)^\Gamma ) \]
    sends $[W , \partial W]$ to the fundamental class $[\partial W]$ of the boundary (this fact is known as the `boundary of Dirac is Dirac' principle, see e.g., \cite{higsonAnalyticHomology2000}*{Proposition 11.2.15}). 
    \item The boundary map 
    \[\partial \colon \K_1(Q^*(W)^\Gamma ) \to \K_0(C^*(W)^\Gamma )\]
    is called the equivariant coarse index map and denoted by $\mathop{\mathrm{Ind}}$. When $W =\bR^n$ and $\Gamma = \bZ^n$ acting on $W$ by the translation, then the equivariant coarse index is the same as the family index (\cite{baumClassifyingSpaceProper1994}*{Example 3.11}), and in particular sends the fundamental class $[\bR^n] \in \K_1(Q^*(\bR^n)^{\bZ^n} )$ to the Bott generator of the top degree in $\K_0(C^*(\bR^n)^{\bZ^n}) \cong \K_0(C^*_r(\bZ^n)) \cong \K^0(\bT^n)$. 
\end{enumerate}
\end{rmk}

\subsection{Equivariant coarse geometry of helical surfaces}
Here we apply the facts listed in the previous subsection to the $\bZ$-equivariant coarse geometry of the helical surface. 
The following lemma, which lift a finite propagation operator on $X = \bR^2$ to the covering space $\widetilde{X}_{\bz, R}$, is a key ingredient of the proof of \cref{thm:main}. 
\begin{lem}\label{lem:sigma}
There is a $\ast$-homomorphism
\[ s \colon C^*(X) \to C^*(\widetilde{X}_{\bz,1})^\bZ/C^*(\partial \widetilde{X}_{\bz,1} \subset \widetilde{X}_{\bz,1})^\bZ, \]
 which extends to 
\[ s \colon D^*(X) \to D^*(\widetilde{X}_{\bz,1})^\bZ/D^*(\partial \widetilde{X}_{\bz,1} \subset \widetilde{X}_{\bz,1})^\bZ. \]
They induce a corresponding $\ast$-homomorphism between $Q^*$ coarse C*-algebras.
Moreover, the image $s_*[X]$ of the fundamental class of $X$ is $[\widetilde{X}_{\bz, 1} , \partial \widetilde{X}_{\bz, 1}]$. 
\end{lem}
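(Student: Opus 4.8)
\emph{Reduction away from the origin.} I would first pass from $X$ to the exterior $X_1 := \{v \in X : |v| \ge 1\}$ of the unit disc: the restriction of the covering $\widetilde{X}_0 \to X_0$ is the $\bZ$-Galois Riemannian covering $p \colon \widetilde{X}_{\bz,1} \to X_1$, with $p^{-1}(\partial X_1) = \partial\widetilde{X}_{\bz,1}$. Since $X_1$ is a coarsely dense subspace of $X$, the inclusion induces isomorphisms $C^*(X) \cong C^*(X_1)$, $D^*(X) \cong D^*(X_1)$ and $Q^*(X) \cong Q^*(X_1)$, canonical up to the usual equivalences of coarse geometry, and (as part of this standard reduction) these carry the fundamental class $[X] = [D_X]$ to $[X_1] := [D_{X_1}] \in \K_1(Q^*(X_1))$. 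So it suffices to construct a ``transfer'' $\ast$-homomorphism
\[ \ell \colon C^*(X_1) \longrightarrow C^*(\widetilde{X}_{\bz,1})^\bZ / C^*(\partial\widetilde{X}_{\bz,1} \subset \widetilde{X}_{\bz,1})^\bZ, \]
extending to $D^*(X_1)$ and descending to $Q^*(X_1)$, with $\ell_*[X_1] = [\widetilde{X}_{\bz,1}, \partial\widetilde{X}_{\bz,1}]$, and then take $s$ to be $\ell$ composed with the isomorphism $C^*(X) \cong C^*(X_1)$.

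\emph{Lifting finite-propagation kernels.} Here I would run the geometrization technique of \cite{gongGeometrizationStrongNovikov2008}, as specialized in \cite{kubotaCodimensionTransferHigher2021}. Fixing a Borel fundamental domain $U \subset \widetilde{X}_{\bz,1}$ for the $\bZ$-action gives a $\bZ$-equivariant identification $L^2(\widetilde{X}_{\bz,1}) \cong \ell^2(\bZ) \otimes L^2(X_1)$. The geometric input is that $p$ is \emph{trivial over small balls away from the boundary}: a metric ball of radius $\rho$ in $X_1$ whose centre is at distance $> \rho$ from $\partial X_1$ is an ordinary convex Euclidean ball, hence simply connected, so $p$ restricts over it to the trivial $\bZ$-cover. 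Consequently, for $T \in \bC[X_1]$ of propagation $R$, lifting its Schwartz kernel ``along the branch within distance $R$'' is unambiguous over $\widetilde{X}_{\bz,1} \setminus N_{2R}(\partial\widetilde{X}_{\bz,1})$, where $N_r(\cdot)$ denotes the metric $r$-neighbourhood, and over the $\bZ$-invariant remainder one completes the definition by any $\bZ$-equivariant, bounded, finite-propagation rule. The resulting $\widetilde{T}$ on $L^2(\widetilde{X}_{\bz,1})$ is $\bZ$-invariant, has propagation $\le R$, and is locally compact; two choices differ by an element of $C^*(\partial\widetilde{X}_{\bz,1} \subset \widetilde{X}_{\bz,1})^\bZ$; so $T \mapsto [\widetilde{T}]$ is a well-defined linear $\ast$-preserving map into the quotient. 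It is multiplicative modulo the boundary ideal because over $\widetilde{X}_{\bz,1} \setminus N_{2R}(\partial\widetilde{X}_{\bz,1})$ the triviality of $p$ makes kernel-lifting strictly functorial, so $\widetilde{ST}$ and $\widetilde{S}\widetilde{T}$ agree there. Passing to norm closures yields $\ell$ on $C^*(X_1)$; replacing ``locally compact'' by ``quasi-local'' everywhere gives the $D^*$-level map compatibly, whence the induced map on $Q^* = D^*/C^*$.

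\emph{The image of the fundamental class.} Represent $[X_1] = [D_{X_1}] \in \K_1(Q^*(X_1))$ (note $X_1$ is spin and $2$-dimensional) by $\chi(D_{X_1})$ with $\chi$ odd, $\chi(t) \to \pm 1$ as $t \to \pm\infty$, and $\chi'$ compactly supported; by finite propagation speed of the wave operator, $\chi(D_{X_1}) = \frac{1}{2\pi}\int \widehat{\chi}(t)\, e^{itD_{X_1}}\,dt$ is a norm-limit of finite-propagation operators. Since $p$ is a Riemannian covering, $D_{\widetilde{X}_{\bz,1}} = p^* D_{X_1}$ as $\bZ$-invariant operators and $e^{itD_{\widetilde{X}_{\bz,1}}}$ coincides with the branchwise lift of $e^{itD_{X_1}}$ outside $N_{|t|}(\partial\widetilde{X}_{\bz,1})$. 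Hence $\ell$ carries the finite-propagation approximants of $\chi(D_{X_1})$ to those of $\chi(D_{\widetilde{X}_{\bz,1}})$, and therefore $\ell_*[X_1] = [\chi(D_{\widetilde{X}_{\bz,1}})] = [\widetilde{X}_{\bz,1}, \partial\widetilde{X}_{\bz,1}]$ in $\K_1\bigl(Q^*(\widetilde{X}_{\bz,1})^\bZ / Q^*(\partial\widetilde{X}_{\bz,1} \subset \widetilde{X}_{\bz,1})^\bZ\bigr)$. Together with the first step this gives $s_*[X] = [\widetilde{X}_{\bz,1}, \partial\widetilde{X}_{\bz,1}]$.

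\emph{Main obstacle.} The crux is the first part of the lift: formalizing ``lift along the branch within distance $R$'' so that it is a genuine construction and so that the triviality of $p$ away from $\partial X_1$ yields, with propagation-controlled bounds, that every choice-dependent discrepancy and every defect of multiplicativity lies in $C^*(\partial\widetilde{X}_{\bz,1}\subset\widetilde{X}_{\bz,1})^\bZ$ — together with checking the compatibility between the $C^*$- and $D^*$-versions. Once $\ell$ is available, identifying $\ell_*[X_1]$ with the relative fundamental class is a routine finite-propagation-speed computation, and the coarse-equivalence reduction of the first paragraph is standard. This technical heart is what \cref{section:appendix} carries out, following \cite{kubotaCodimensionTransferHigher2021}.
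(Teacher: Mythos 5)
Your overall strategy is essentially the paper's: lift a finite-propagation kernel branchwise to the helical surface within propagation distance, observe that all choice-dependence and all defects of multiplicativity and $\ast$-preservation land in $C^*(\partial\widetilde{X}_{\bz,1}\subset\widetilde{X}_{\bz,1})^\bZ$, push this through to $D^*$ and $Q^*$, and finish by a symbol/propagation comparison of Dirac operators. The technical core matches what the paper carries out in \cref{section:appendix}, following \cite{kubotaCodimensionTransferHigher2021}, with one genuine stylistic divergence: you first pass to $X_1$, the exterior of the unit disc, via coarse equivalence, whereas the paper keeps the source equal to $X=\bR^2$ and lets the lift formula $\widetilde{k}_R(\tilde x,\tilde y)=k(\pi(\tilde x),\pi(\tilde y))\cdot\mathbf{1}_{d(\tilde x,\tilde y)<R}$ implicitly discard what $K$ does near the origin. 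Your route is cleaner conceptually; the paper's is more hands-on and avoids ever introducing a boundary on the source side. A second divergence is your use of $\chi(D)$ and the wave-operator expansion to represent fundamental classes, while the paper uses $0$th-order pseudodifferential operators $F,F_Z$ with propagation $<1/2$ and compares principal symbols directly.

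The real gap is in your fundamental-class computation: you invoke $e^{itD_{\widetilde{X}_{\bz,1}}}$ and $\chi(D_{\widetilde{X}_{\bz,1}})$, but $\widetilde{X}_{\bz,1}$ is a manifold \emph{with boundary}, so $D_{\widetilde{X}_{\bz,1}}$ is not essentially self-adjoint and neither expression is defined without a choice of boundary condition or completion. The same issue appears on the source side once you pass to $X_1$ (which also has boundary), so writing $[X_1]=[D_{X_1}]$ is equally informal. The paper resolves exactly this point by attaching an infinite cylinder along $\partial\widetilde{X}_{\bz,1}$ to obtain a complete manifold $Z$, choosing $F_Z$ there, and showing that $\Pi\widetilde{F}_{1/2}\Pi$ and $\Pi F_Z\Pi$ agree modulo $D^*(\partial\widetilde{X}_{\bz,1}\subset\widetilde{X}_{\bz,1})^\bZ+C^*(\widetilde{X}_{\bz,1})^\bZ$, which is what actually produces the identity $s_*[X]=[\widetilde{X}_{\bz,1},\partial\widetilde{X}_{\bz,1}]$ in the relevant quotient. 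Your argument can be repaired by making an analogous cylinder attachment (or fixing APS-type boundary conditions and checking that the resulting ambiguity lies in the boundary ideal), but as written the step ``$\ell$ carries $\chi(D_{X_1})$ to $\chi(D_{\widetilde{X}_{\bz,1}})$'' is not meaningful.
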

A proof of this lemma is given in a more general geometric setting in \cite{kubotaCodimensionTransferHigher2021}. 
In \cref{section:appendix}, we give a full proof which is specific to our setting. 
Here we only sketch the construction of $s$. 

\begin{para}\label{para:lift}
Let $K $ be a bounded operator on $L^2(X)$ with propagation less than $R>0$. Moreover, we assume that $K$ is represented by convolution with a kernel function $k \colon X \times X \to \bC$ as $K\xi (x) = \int _X k(x,y)\xi(y)dy $. The set of such operators is dense in $C^*(X)$. 
We define the function $\widetilde{k}_R \colon \widetilde{X}_{\bz,R} \times \widetilde{X}_{\bz ,R} \to \bC$ as
\begin{align}
    \widetilde{k}_R(\tilde{x}, \tilde{y}) = \begin{cases} k(\pi (\tilde{x}), \pi(\tilde{y})) & \text{ if $d(\tilde{x}, \tilde{y}) < R$, }\\ 0 & \text{ otherwise,}\end{cases} \label{eq:kernel}
\end{align}
where $d$ denotes the metric on $\widetilde{X}_{\bz,R}$ induced from its Riemannian metric, and set 
\begin{align}
\widetilde{K}_R \xi(\tilde{x}):= \int_{\tilde{y} \in \widetilde{X}_{\bz,R}}\widetilde{k}_R(\tilde{x}, \tilde{y})\xi(\tilde{y}) d\tilde{y}. \label{eq:lift}
\end{align}
This $\widetilde{K}_R$ determines a bounded operator (this is a non-trivial part of the proof), which is locally compact and has propagation less than $R$. For $0 < R < S$, the difference $\widetilde{K}_S - \widetilde{K}_R$ is supported in the $(S+R)$-neighborhood of $\partial \widetilde{X}_{\bz,1}$, and hence lies in the ideal $C^*(\partial \widetilde{X}_{\bz, 1} \subset \widetilde{X}_{\bz, 1})$. 
Therefore, $s(K) := \widetilde{K}_R$ gives rise to a well-defined linear map from a dense subalgebra of $C^*(X)$ to $C^*(\widetilde{X}_{\bz, 1})^\bZ / C^*(\partial \widetilde{X}_{\bz, 1} \subset \widetilde{X}_{\bz,1})^\bZ $. 
Note that it remains to prove that this $s$ extends to $C^*(\widetilde{X}_{\bz, 1})^\bZ $, in other words, $s$ is a bounded linear map. 
\end{para}

Next we relate this $s$ with the extension \eqref{eq:ext2d}. 
Let $\psi_{\mathbf{0}} $ be an $L^2$-function on $L^2(X)$ supported in the $\varepsilon$-neighborhood of $\mathbf{0}$. For $\bv \in \bZ^2$, let $\psi_{\bv}(x):=\psi_{\mathbf{0}}(x-\bv)$. 
Then the Hilbert subspace $\bigoplus \bC \cdot \psi_{\bv} \subset L^2(X)$ is identified, as unitary representations of $\bZ^2$, with $\ell^2(\bZ^2)$. We write the unitary identifying these Hilbert spaces as $V$. 
Moreover, for each $\tilde{\bv} \in \bX_{\bz}$, let $\psi_{\tilde{\bv}}$ denote the restriction of the pull-back $\pi^* \psi _{\bv}$ to the $\varepsilon $-neighborhood of $\tilde{\bv}$.   
Then the Hilbert subspace $\bigoplus \bC \cdot \psi_{\tilde{\bv}} \subset L^2(\widetilde{X}_{\bz,1})$ is identified with $\ell^2(\bX_\bz)$ by a unitary, denoted by $\widetilde{V}$. 

Now, by definition of $s$ given in \cref{lem:sigma}, we have
\[s(V S_{\bx} V^*) = s\Big( \sum _{\bv} |\psi_{\bv + \bx} \rangle \langle \psi_\bv | \Big) =  \sum_{\tilde{\bv}} | \psi_{[\tilde{\bv} + \bx]} \rangle \langle \psi_{\tilde{\bv}} | = \widetilde{V} \widetilde{S}_{\bx} \widetilde{V}^* \]
modulo $C^*(\partial \widetilde{X}_{\bz,1} \subset \widetilde{X}_{\bz,1})$. Similarly, we have $s(V S_{\by} V^*) = \widetilde{V} \widetilde{S}_\by \widetilde{V}^*$ modulo $C^*(\partial \widetilde{X}_{\bz,1} \subset \widetilde{X}_{\bz,1})$. 
They mean that $\Ad (\widetilde{V})$ sends $\cB_\bz$ to $C^*(\widetilde{X}_{\bz,1})^\bZ$ and  the diagram
\[ 
\xymatrix{
0 \ar[r] & \bK \otimes C^*_r(\bZ) \ar[r] \ar[d]^{\Ad (\widetilde{V})} & \cB_\bz \ar[r] \ar[d]^{\Ad (\widetilde{V})} & C^*_r(\bZ^2) \ar[r] \ar[d]^{s \circ F \circ \Ad (V)} & 0 \\
0 \ar[r] & C^*(\partial \widetilde{X}_{\bz, 1} \subset \widetilde{X}_{\bz, 1})^\bZ \ar[r]  & C^*(\widetilde{X}_{\bz, 1})^\bZ  \ar[r]  & \frac{C^*(\widetilde{X}_{\bz, 1})^\bZ}{ C^*(\partial \widetilde{X}_{\bz, 1} \subset \widetilde{X}_{\bz, 1})^\bZ } \ar[r]  & 0 
}
\]
commutes, where $F \colon C^*(X)^{\bZ^2} \to C^*(X)$ denotes the inclusion. This induces the commutative diagram of K-theory
\begin{align}
\begin{split}
\xymatrix@C=4em{
\K_0(C^*_r(\bZ^2)) \ar[r]^{\partial } \ar[d]^{\Ad (V)} & \K_{-1}(\bK \otimes C^*_r(\bZ)) \ar[d]^{\Ad (\widetilde{V})}  \\
\K_0(C^*(X)^{\bZ^2} ) \ar[r]^{\partial \circ s_*  \circ F_* \ \ \ \ \ \ \ \ } & \K_{-1}(C^*(\partial \widetilde{X}_{\bz, 1} \subset \widetilde{X}_{\bz, 1})^\bZ). \\
}
\end{split}
\label{eq:diagram}
\end{align}
Moreover, by \cref{rmk:coarse} (1), the vertical maps in \eqref{eq:diagram} are isomorphisms.

\begin{proof}[Proof of \cref{thm:main}]
By \cref{lem:sigma}, we have the commutative diagram
\[\mathclap{
\xymatrix{
\K_1(Q^*(X)^{\bZ^2}) \ar[r]^{F_*} \ar[d]^{\mathop{\mathrm{Ind}}}  &\K_1(Q^*(X)) \ar[r]^{s_* \hspace{4ex}} \ar[d]^{\mathop{\mathrm{Ind}}} & \K_1\Big(\frac{Q^*(\widetilde{X}_{\bz, 1})^\bZ}{Q^*(\partial \widetilde{X}_{\bz, 1} \subset \widetilde{X})^\bZ_{\bz, 1}} \Big) \ar[r]^\partial   \ar[d]^{\mathop{\mathrm{Ind}}} & \K_0(Q^*(\partial \widetilde{X}_{\bz, 1})^\bZ ) \ar[d]^{\mathop{\mathrm{Ind}}} \\
\K_0(C^*(X)^{\bZ^2}) \ar[r]^{F_*} & \K_0(C^*(X)) \ar[r]^{s_* \hspace{4ex}} & \K_0\Big( \frac{C^*(\widetilde{X}_{\bz, 1})^\bZ}{C^*(\partial \widetilde{X}_{\bz, 1} \subset \widetilde{X}_{\bz, 1})^\bZ} \Big) \ar[r]^\partial  & \K_{-1}(C^*(\partial \widetilde{X}_{\bz, 1})^\bZ ) .
}}
\]
By \eqref{eq:diagram}, it suffices to show the composition $\partial \circ s_* \circ F_*$ in the second row maps the Bott generator $\beta_{xy} \in \K_0(C^*(X)^{\bZ^2})$ to $\beta \in \K_{-1}(C^*(\partial \widetilde{X}_{\bz, 1})^\bZ )$.
This is checked as
\begin{align*}
    (\partial \circ s_* \circ F_*) (\beta_{xy}) &= (  \partial \circ  s_* \otimes F_* \circ  \mathop{\mathrm{Ind}})([X])\\
    &= ( \mathop{\mathrm{Ind}}\circ  \partial \circ  s_* \otimes F_* )([X]) \\
    &=(\mathop{\mathrm{Ind}} \circ \partial) [\widetilde{X}_{\bz,1}, \partial \widetilde{X}_{\bz,1}] \\
    &= \mathop{\mathrm{Ind}} [\partial \widetilde{X}_{\bz,1}] = \beta.
\end{align*}
Here the first and the last equalities are due to \cref{rmk:coarse} (5), the third equality follows from \cref{lem:sigma} and the forth equality is the `boundary of Dirac is Dirac' principle exposed in \cref{rmk:coarse} (4). 
\end{proof}

\section{Miscellaneous remarks}\label{section:remarks}
We finish the paper by a list of remarks. 
\subsection{General Burgers vector}\label{rmk:Burgers}
\cref{thm:main} is generalized to a dislocated lattice with a general Burgers vector. Let us consider the Burgers vector $\bb:=(b_x,b_y,b_z ) \in \bZ^3$. We assume that $\bb $ extends to a $\bZ$-basis $\{ \ba, \bc, \bb \}$ of $\bZ^3$, i.e., there is no $\bb' \in \bZ^3$ and $k \in \bZ \setminus \{ 1,0,-1\}$ such that $k\bb'= \bb$. 
Let $T_\bb := (\ba \ \bc \ \bb) \in GL_3(\bZ)$, i.e., $T_\bb \bx = \ba$, $T_\bb \by = \bc$ and $T_\bb \bz = \bb$. 
Then the dislocated lattice is defined as the discrete subset $T_\bb \cdot \bX_{\bz} \subset \bR^3$.  
We define the unitary $\Psi_\bb \colon \ell^2(\bX_\bz) \to \ell^2(\bX_\bb)$ as $\Psi_\bb(\delta_\bv) = \delta_{T_\bb \bv}$. We define the C*-algebra  $\cA_\bb$ as $\Ad (\Psi_\bb)(\cA_\bz)$, which is generated by three unitaries $\widetilde{S}_\ba:=\Psi_\bb \widetilde{S}_\bx \Psi_\bb^*$, $\widetilde{S}_{\bc}:= \Psi_\bb \widetilde{S}_\by \Psi_\bb^*$, $\widetilde{S}_\bb:=\Psi_\bb \widetilde{S} _\bz \Psi_\bb^*$ and $\Ad(\Psi_\bb)(\bK \otimes C^*_r(\bZ))$. 

Then we have a commutative diagram of exact sequences
\[
\xymatrix{
0 \ar[r] & \bK \otimes C^*_r(\bZ)  \ar[r] \ar[d]^{\Ad (\Psi_\bb)} & \cA_\bz \ar[r] \ar[d]^{\Ad (\Psi_\bb)} \ar[r] & C^*_r(\bZ^3) \ar[d]^{T_\bb} \ar[r] & 0\\ 
0 \ar[r] &\bK \otimes C^*_r(\bZ)  \ar[r]  & \cA_\bb \ar[r]  \ar[r] & C^r(\bZ^3) \ar[r] & 0\\
}
\]
which shows that the boundary map in K-theory with respect to the second raw is identified with the one given in \cref{thm:main} through $T_\bb $. 

\subsection{Translation invariance in the $xy$-direction}\label{rmk:transl}
In \cref{section:formulation,section:proof},  we assume that the Hamiltonian $H$ is translation invariant away from the line defect. Indeed, the proof of \cref{thm:main} shows that this assumption is not needed. 
Let us only assume that the Hamiltonian $H \in \bB(\ell^2(\bZ^3))$ is translation invariant in the $z$-direction. 
Let $C^*_u(|\bZ^3|)^{\bZ}$ denote the closure of the set of finite propagation operators on $\ell^2(\bZ^3)$ which is translation invariant in the $z$-direction (this C*-algebra is called the invariant uniform Roe algebra). Here $|\bZ^3|$ stands for the lattice $\bZ^3$ regarded as a metric space. 
This is a subalgebra of the Roe algebra $C^*(|\bZ^3|)$. 
In \cite{kubotaControlledTopologicalPhases2017}, the (possibly not translation invariant) topological phases is classified by the K-theory of the Roe algebra. (More precisely, in \cite{kubotaControlledTopologicalPhases2017} two kinds of classifications are considered; K-theory of the uniform Roe algebra and that of the Roe algebra. The latter classification is more rough.)  The invariant Roe algebra $C^*(|\bZ^3|)^{\bZ}$ is isomorphic to $C^*(|\bZ^2|) \otimes C^*_r(\bZ)$, where the right tensor component is generated by the unitary $S_z$. By the K\"{u}nneth theorem, its K-theory is isomorphic to 
\[\K(C^*(|\bZ^3|)^{\bZ}) \cong \K_{*-2}  \oplus \K_{*-3} = \bZ \beta_{xy} \oplus \bZ \beta_{xyz}, \]
where $\beta_{xy}$ and $\beta_{xyz}$ are the image of the corresponding element with respect to the inclusion $C^*_r(\bZ^3) \subset C^*(|\bZ^3|)^\bZ$.
Now the $\ast$-homomorphism $s$ defined in \cref{lem:sigma} extends to
\[\tilde{s} \colon C^*(|\bZ^3|)^{\bZ} \cong C^*(|\bZ^2|) \otimes C^*_r(\bZ) \to C^*(\widetilde{X}_{\bz,1})^\bZ / C^*(\partial \widetilde{X}_{\bz,1} \subset \widetilde{X}_{\bz,1})^\bZ \]
as $\tilde{s}(T \otimes S_z^k):=s(T) \otimes \widetilde{S}_z^k$. The induced map in K-theory 
\[\K_*(C^*(|\bZ^3|)^{\bZ}) \xrightarrow{s_*} \K_*\bigg( \frac{C^*(\widetilde{X}_{\bz,1})^\bZ}{C^*(\partial \widetilde{X}_{\bz,1} \subset \widetilde{X}_{\bz,1})^\bZ }\bigg) \xrightarrow{\partial} \K_*(C^*(\partial \widetilde{X}_{\bz,1} \subset \widetilde{X}_{\bz,1})^{\bZ})\]
sends $\beta_{xy}$ to $\beta$ and $\beta_{xyz}$ to zero.

\subsection{Bulk-dislocation correspondence of topological insulators}\label{rmk:TI}
Our proof of \cref{thm:main} is given without any use of particularity of the complex K-theory (indeed it relies only on the formal diagram chasing argument and fundamental facts of coarse index theory listed in \cref{rmk:coarse}). Therefore the same proof also work for Real K-theory of C*-algebras. Namely, it is shown that the boundary map $\partial$ of the exact sequence \eqref{eq:ext} sends $\beta_{xy}^{n-2} \in \KR_n(C^*_r(\bZ^3))$ (cf.\ \cref{rmk:real}) to $\beta \in \KR_{n-1}(\bK \otimes C^*_r(\bZ ))$, and other generators to zero. 

This enables us to generalize the bulk-dislocation correspondence for topological insulators with one of the Altland--Zirnbauer 10-fold way symmetry \cite{altlandNonstandardSymmetryClasses1997}. 
This is a class of symmetries generated by some of time-reversal symmetry $T$ (antilinear unitary commuting with $H$), the particle-hole symmetry $C$ (antilinear unitary anticommuting with $H$) and sublattice symmetry $S $ (linear unitary anticommuting with $H$) with the relations $T^2=\pm 1$, $C^2=\pm 1$, $S^2=1$, $S=CT$. 
The set of self-adjoint invertible operators with the symmetry is classified by one of 2 complex and 8 Real K-groups of $C^*_r(\bZ^d)$ (for this, we use Van Daele's description of Real $\bZ_2$-graded C*-algebra K-theory \cite{kellendonkAlgebraicApproachTopological2017} or, equivalently, a twisted equivariant K-theory \cite{kubotaNotesTwistedEquivariant2016}). 

We list in \cref{tab:my_label} the topological classification of $3$-dimensional weak insulators in the $xy$-direction, which is the same as the classification of $2$-dimensional strong topological insulators. 
There exists non-trivial bulk-dislocation correspondence for type A, AI, D, DIII, AII, C topological insulators. 
In the literature of physics, this is already discovered by Teo--Kane~\cite{teoTopologicalDefectsGapless2010}. 

\begin{table}[t]
    \centering
    \begin{tabular}{|r|c|c|c|c|c|c|c|c|c|c|c|c|}\hline
        Type &  A & AIII &AI & AIII & AI & BDI & D & DIII & AII & CII & C &  CI   \\ \hline 
        K-group & $\bZ$ & $0$ & $\bZ$ & $0$ & $0$ & $0 $ & $\bZ$ & $\bZ_2$ & $\bZ_2$ & $0$ & $\bZ$ & $0$ \\ \hline  
    \end{tabular}
    \caption{Strong invariants in dimension $2$}
    \label{tab:my_label}
\end{table}

\subsection{Quantum Hall effect along the dislocation}\label{rmk:IQHE}
Here we discuss a consequence of \cref{thm:main} in the case of type A topological insulator. 
Let $H$ be a $3$-dimensional type A topological insulator. 
Parallel to the case of bulk-edge correspondence of integer quantum Hall effect, the topology of the bulk Hamiltonian $H$ induces the Hall conductivity along the line defect. Following \cite{kellendonkEdgeCurrentChannels2002}*{Equation (24)}, the Hall conductance along the dislocation would be calculated as
\[\sigma_{\mathrm{screw}} =\frac{e^2}{h}  \cdot \Big(  - \lim_{\Delta \to \{ \mu \}} \frac{1}{|\Delta|}  \int_{\bT } \mathop{\mathrm{Tr}}(\widetilde{P}_{\Delta}(k) (\partial _{k_z} \widetilde{H})(k))dk \Big) ,  \]
where $\mathop{\mathrm{Tr}}$ is the (unbounded) trace on the compact operator algebra $\bK$, $\widetilde{P}_\Delta$ is the spectral projection of $\widetilde{H}_\bz$ with respect to the interval $\Delta$, and the integral is taken through the identification $C^*_r(\bZ) \cong C(\bT)$. Note that, since the operators $\widetilde{P}_\Delta$ and $\widetilde{H}$ are $\bZ$-invariant in the $z$-direction, the Fourier transform in the $z$-direction identifies them with the corresponding operator-valued functions on $\bT$. 
As is shown in \cite{kellendonkEdgeCurrentChannels2002}*{Theorem 1}, 
this value coincides with $\partial [\widetilde{H}_\bz]$ through the identification $\K_1(\bK \otimes C^*_r(\bZ)) \cong \bZ$ given by a cyclic $1$-cocycle on $C^*_r(\bZ)$. 

\cref{thm:main} shows that $\sigma_{\mathrm{screw}}$ coincides with the weak Chern number in the $xy$-direction of the bulk Hamiltonian $H$, which is calculated as by the Chern-Weil theory as
\[ \sigma_{\mathrm{bulk}}^{xy} = \frac{e^2}{h} \cdot (2\pi i) \cdot \int_{\bT^2 \times \{0\}} \tr (P[\partial_{k_x}P , \partial_{k_y}P]) dk_xdk_y, \]
where $\tr$ denotes the trace on $\bM_N$ and $P$ denotes the spectral projection of $H$ (regarded as a matrix-valued function on $\bT^3$ through the isomorphism $C^*_r(\bZ^3) \cong C(\bT^3)$) corresponding to the negative eigenvalues. The right hand side is analogous to the TKNN formula in dimension $2$.

\subsection{Relation to the codimension 2 transfer map}\label{rmk:codim2}
The proof of \cref{thm:main} given in this paper is originated from the C*-algebraic codimension 2 transfer map in higher index theory introduced in \cites{kubotaGromovLawsonCodimensionObstruction2020,kubotaCodimensionTransferHigher2021}. 
Let $M$ be a manifold and let $N$ be its codimension $2$ submanifold such that $\pi_1(N) \to \pi_1(M)$ is injective, $\pi_2(N) \to \pi_2(M)$ is surjective, and the normal bundle of $N $ is trivial. Set $\Gamma :=\pi_1(M)$ and $\pi:=\pi_1(N)$. 
In \cite{kubotaGromovLawsonCodimensionObstruction2020}*{Theorem 1.1}, a group homomorphism
\[ \tau_\sigma \colon \K_*(C^*\Gamma  ) \to \K_{*-2}(C^*\pi) \]
is constructed (the notation $\tau_\sigma$ is introduced in \cite{kubotaCodimensionTransferHigher2021}). This homomorphism $\tau_\sigma$ satisfies $\tau_\sigma (\alpha_\Gamma (M)) = \alpha_\pi (N)$. Here, for a closed spin manifold $M$, $\alpha_\Gamma (M)$ denotes the higher index of the Dirac operator on the universal covering $\widetilde{M}$. 
In \cite{kubotaGromovLawsonCodimensionObstruction2020}, a C*-algebra extension
\[ 0 \to \bK_{C^*(\pi \times \bZ) } \to A \to C^*\Gamma \to 0  \]
is constructed, and $\tau_\sigma$ is defined to be the boundary map in K-theory. 

In the simplest case, when $M =\bT^2$ and $N=\pt$, this extension is the same thing as \eqref{eq:ext2d}. 
The fact $\tau_\sigma (\alpha_\Gamma (M)) = \alpha_\pi (N)$ corresponds to \cref{thm:main}. 
The proof in this paper, particularly the construction of a lifting map $s$ in \cref{lem:sigma}, is a special case of the one constructed in Lemma 3.14, Proposition 4.3, and 4.7 of \cite{kubotaCodimensionTransferHigher2021}.

\appendix 
\section{Proof of the lifting lemma}\label{section:appendix}
\cref{lem:sigma} is an essential ingredient of the proof of \cref{thm:main}. Its proof is given in \cite{kubotaCodimensionTransferHigher2021} in a more general setting concerned with codimension $2$ inclusions of manifolds (cf. \cref{rmk:codim2}). 
In this appendix, we restate the proof given in \cite{kubotaCodimensionTransferHigher2021} in a way that is specific to the setting we need, i.e., the coarse index theory of the helical surface.

For a bounded operator $K$ with propagation less than $R>0$, we define its lift $\widetilde{K}_R$ as \eqref{eq:lift}. 
If $K$ is a locally Hilbert-Schmidt operator (i.e., $Kf$ and $fK$ are Hilbert--Schmidt for any $f \in C_c(X)$), then the kernel function $k$ exists and is a Borel function on $X \times X$. 
Even if $K$ is not locally Hilbert-Schmidt, the kernel function $k$ makes sense as a distribution on $X \times X$ supported near the diagonal. Its lift \eqref{eq:kernel} is also defined, and hence $\widetilde{K}_R$ is well-defined as a linear map $\sD(X) \to \sD'(X)$, where $\sD(X)$ denotes the Fr\'{e}chet space of compactly supported test functions. Note that the image of $\widetilde{K}_R$ is included to $L^2(X)$.

\begin{para}
We show that $\widetilde{K}_R$ is a bounded operator with respect to the $L^2$-norms on the domain and the range. 
For $\bv \in \bZ^2$, let $P_{\bv}$ denote the projection onto the $L^2$-space of $(\bv + [0,1) \times [0,1))$. Similarly, for $\tilde{\bv} \in \bX_{\bz}$, let $P_{\tilde{\bv}}$ denote the projection onto the $L^2$-space of the connected component of $\pi^*(\pi(\tilde{\bv}) + [0,1) \times [0,1))$ containing $\tilde{\bv}$. 
Now $\widetilde{K}_R$ is decomposed into a sum
\[\widetilde{K}_R = \sum_{\br \in \bZ^2} \Big( \sum_{\tilde{\bv}\in \bX_{\bz}} P_{[\tilde{\bv} + \br]} \widetilde{K}_R P_{\tilde{\bv}} \Big), \]
which is finite with respect to $\br$, and the norm of each summand is bounded as 
\[\Big\| \sum_{\tilde{\bv} \in \bX_\bz} P_{[\tilde{\bv} + \br]} \widetilde{K}_R P_{\tilde{\bv}}\Big\| \leq \sup_{\tilde{\bv} \in \bX_{\bz}} \| P_{[\tilde{\bv} +\br]} \widetilde{K}_R P_{\tilde{\bv}} \| = \sup_{\tilde{\bv} \in \bX_{\bz}} \| P_{\pi(\tilde{\bv}) +\br} K P_{\pi(\tilde{\bv})} \| \leq \| K\|. \]
\end{para}

\begin{para}
We observe that, if $K$ is locally compact (resp.~pseudo-local), then $\widetilde{K}_R$ is also locally compact (resp.~pseudo-local). We firstly notice that local compactness and pseudo-locality of an operator is a local condition, i.e., it is enough to check that any $f \in C_c(X)$ supported in an $\varepsilon $-ball satisfies $f\widetilde{K}_R, \widetilde{K}_Rf \in \bK$ (resp.~$[\widetilde{K}_R,f] \in \bK$). 
To see this, notice that $\widetilde{K}_Rf$ and $f \widetilde{K}_R$ are supported in an $(\varepsilon +R)$-ball in $\widetilde{X}_{\bz,1}$, which is identified with an $(\varepsilon +R)$-ball in $X$ through the covering map. 
This identification gives rise to a partial isometry of $L^2$-spaces, which identifies $\widetilde{K}_Rf$ and $f \widetilde{K}_R$ with $Kf$ and $fK$ respectively.  
\end{para}

\begin{para}\label{para:A3}
We show that the map $K \mapsto \widetilde{K}_R$ is multiplicative modulo the ideal $C^*(\partial \widetilde{X}_{\bz, 1} \subset \widetilde{X}_{\bz,1})^\bZ $. 
For locally Hilbert--Schmidt operator $K, L \in \bB(L^2(X))$ with $\mathrm{Prop}(K)<R$ and $\mathrm{Prop}(L)<S$, the composition $\widetilde{K}_R\widetilde{L}_S$ is given by convolution with 
\begin{align*}
t(\tilde{x}, \tilde{z}) = \int_{\tilde{y} \in \widetilde{X}_{\bz, R}} \widetilde{k}_R(\tilde{x}, \tilde{y}) \widetilde{l}_S (\tilde{y}, \tilde{z}) d\tilde{y}.
\end{align*}
If $\tilde{x},\tilde{z}$ and the boundary $\partial \widetilde{X}_{\bz, 1}$ are separated by a distance $S + R$, then
\begin{align*}
    t(\tilde{x}, \tilde{z})= \begin{cases} \int_{y \in X} k(\pi(\tilde{x}), y)l(y,\pi(\tilde{z}))dy & \text{ if $d(\tilde{x}, \tilde{z}) <R+S$,} \\ 0 & \text{ otherwise.} \end{cases}
\end{align*}
This shows that $\widetilde{K}_R \widetilde{L}_S$ and $\widetilde{KL}_{R+S}$ coincides modulo $C^*(\partial \widetilde{X}_{\bz, 1} \subset \widetilde{X}_{\bz,1})^\bZ $.
\end{para}

Now we get a $\ast$-homomorphism 
\[s \colon \bC_{\mathrm{HS}} [X] \to C^*(\widetilde{X}_{\bz,1})^\bZ / C^*(\partial \widetilde{X}_{\bz,1} \subset \widetilde{X}_{\bz,1})^\bZ,\]
where $\bC_{\mathrm{HS}}[X]$ denotes the set of locally Hilbert--Schmidt operators with finite propagation. This extends to a $\ast$-homomorphism from $C^*(X)$ since the operator norm on $\bC_{\mathrm{HS}}[X]$ is the largest norm satisfying the C*-condition (this is a consequence of the amenability of the group $\bZ^2$).

\begin{para}
We extend $s$ to a $\ast$-homomorphism between pseudo-local coarse C*-algebras. This part is completely the same as \cite{kubotaCodimensionTransferHigher2021}*{Proposition 4.3}. We repeat the proof just for self-consistency of this appendix.  
As is shown in \cite{roeIndexTheoryCoarse1996}*{Lemma 5.8}, any operator $T \in D^*(X)$ is decomposed as $T = T_0 + T_1$, where $\mathrm{Prop}(T) < 1/2$ and $T^1 \in C^*(X)$. Let $\Pi$ denote the projection onto $L^2(\widetilde{X}_{\bz,1}) \subset L^2(\widetilde{X}_{\bz, 1/2})$. Set
\[s(T):= \Pi (\widetilde{T_0})_{1/2} \Pi + s(T_1) \in D^*(\widetilde{X}_{\bz,1})^\bZ / D^*(\partial \widetilde{X}_{\bz,1} \subset \widetilde{X}_{\bz,1}). \]
This is well-defined independent of the choice of the decomposition $T = T_0 + T_1$. Indeed, if we have another decomposition $T=T_0'+T_1'$, then 
\[ (\Pi (\widetilde{T_0})_{1/2} \Pi + s(T_1)) - (\Pi (\widetilde{T_0'})_{1/2} \Pi + s(T_1')) = \Pi ((\widetilde{T_0})_{1/2}- (\widetilde{T_0'})_{1/2} \Pi + s(T_1 - T'_1).  \]
Since $T_0-T_0' = -(T_1 - T_1')$ is contained in $C^*(\widetilde{X}_{\bz,1})$ and has propagation less than $1/2$, the right hand side lies in the ideal $C^*(\partial \widetilde{X}_{\bz,1} \subset \widetilde{X}_{\bz,1})$. 

We also show that this $s$ is multiplicative. For $T,S \in D^*(X)$, we choose decompositions $T=T_0+T_1$ and $S = S_0+S_1$ as $\mathrm{Prop}(T_0)$ and $\mathrm{Prop}(S_0)$ are less than $1/4$. Then we have
\begin{align*}
    s(T)s(S) - s(TS) =& (\Pi (\widetilde{T_0})_{1/2} \Pi(\widetilde{S_0})_{1/2}\Pi  - \Pi (\widetilde{T_0})_{1/2} (\widetilde{S_0})_{1/2} \Pi ) \\
    & + (\Pi (\widetilde{T_0})_{1/2} \Pi s(S_1) - s(T_0S_1)) \\
    & + (s(T_1) \Pi (\widetilde{S_0})_{1/2} \Pi - s(T_1S_0)) \\
    & + (s(T_1)s(S_1) - s(T_1S_1)).
\end{align*}   
By \cref{para:A3}, the second, third and forth components are locally compact and supported in a neighborhood of $\partial \widetilde{X}_{\bz,1}$, i.e., are contained in $C^*(\partial \widetilde{X}_{\bz,1} \subset \widetilde{X}_{\bz,1})^\bZ$. 
Moreover, the first term is contained in $D^*(\partial \widetilde{X}_{\bz,1} \subset \widetilde{X}_{\bz,1})$. Indeed, for any $f \in C_c(\widetilde{X}_{\bz,1})$ vanishing at the boundary, the composition
\begin{align*}
    (\Pi (\widetilde{T_0})_{1/2} \Pi(\widetilde{S_0})_{1/2}\Pi  - \Pi (\widetilde{T_0})_{1/2} (\widetilde{S_0})_{1/2} \Pi ) f & = \Pi \widetilde{T}{}^0_{1/2} (1-\Pi) \widetilde{S}{}^0_{1/2} f  \\
    & = \Pi \widetilde{T}{}^0_{1/2} (1-\Pi) [\widetilde{S}{}^0_{1/2}, f] 
\end{align*}
is in $\bK(L^2(\widetilde{X}_{\bz,1}))$ (here we use $f\Pi = f$). This shows the multiplicativity of $s$. 
\end{para}

\begin{para}
Finally we show that $s$ sends the fundamental class $[X]$ to $[\widetilde{X}_{\bz,1} , \partial \widetilde{X}_{\bz, 1}]$. 
We start with a definition of these fundamental classes. 
Let $Z$ be a complete $2$-dimensional Riemannian manifold obtained by attaching to  $X_{\bz, 1}$ an infinite cylinder of the boundary. 
We choose $0$-th order pseudo-differential operators $F \in \bB(L^2(X,\bC^2))$ and $F_Z \in \bB(L^2(Z,\bC^2))$ whose principal symbols are the same as that of Dirac operators on $X$ and $Z$ respectively (note that the spinor bundle of $X = \bR^2$ is $\bC^2$). 
We may choose them as $\mathrm{Prop}(F) < 1/2$ and $\mathrm{Prop}(F_Z) < 1/2$ (\cite{lawsonjr.SpinGeometry1989}*{Corollary III.3.7}). 
Then both $F \in Q^*(X)$ and $\Pi \widetilde{F} \Pi \in Q^*(\widetilde{X}_{\bz,1})/Q^*(\partial \widetilde{X}_{\bz,1} \subset \widetilde{X}_{\bz,1})$ are unitaries. The $\K_1$-classes determined by them are denoted by $[X]$ and $[\widetilde{X}_{\bz,1}, \partial \widetilde{X}_{\bz,1}]$ respectively. 

Now, the lift $\widetilde{F}_{1/2}$ restricted to $\widetilde{X}_{\bz,1/2}$ is a $0$-th order pseudo-differential operator whose principal symbol is  $\sigma(\widetilde{F}_{1/2}) = \sigma (F_Z)|_{X_{\bz,1}}$. Hence, for any $f \in C_c(\widetilde{X}_{\bz,1})$ vanishing at the boundary, we have
\[(\Pi \widetilde{F}_{1/2} \Pi - \Pi F_Z \Pi)f = [\widetilde{F}_{1/2} - F_Z, f] + f (\widetilde{F}_{1/2} - F_Z) \in \bK(L^2(\widetilde{X}_{\bz,1})). \]
This shows that $\Pi \widetilde{F}_{1/2}\Pi = \Pi F_Z \Pi$ modulo $D^*(\partial \widetilde{X}_{\bz,1} \subset \widetilde{X}_{\bz,1})^\bZ + C^*(\widetilde{X}_{\bz,1})^\bZ$, in other words, their images coincide in $Q^*(\widetilde{X}_{\bz,1})^\bZ /Q^*(\partial \widetilde{X}_{\bz,1} \subset \widetilde{X}_{\bz,1})^\bZ$.
This shows $s_*[X] = [\widetilde{X}_{\bz,1}, \partial \widetilde{X}_{\bz,1}]$.
\end{para}

\bibliographystyle{alpha}
\bibliography{ref.bib}

\end{document}